\documentclass[envcountsame,envcountsect,runningheads]{llncs} %

\bibliographystyle{plain}

\usepackage{graphicx}
\usepackage{latexsym}
\usepackage{amsmath,amssymb,amstext}
\usepackage{comment}
\usepackage{pifont}

\usepackage{enumerate}
\usepackage{color}
\usepackage{wrapfig}
\usepackage{mathtools}

\makeatletter
\def\moverlay{\mathpalette\mov@rlay}
\def\mov@rlay#1#2{\leavevmode\vtop{%
   \baselineskip\z@skip \lineskiplimit-\maxdimen
   \ialign{\hfil$#1##$\hfil\cr#2\crcr}}}
\makeatother

\newcommand{\stdarraystretch}{
  \renewcommand{\arraystretch}{1.3}
  \setlength{\tabcolsep}{1ex}
}

\newcommand{\appsection}[1]{
  \let\oldthesection\thesection
  \renewcommand{\thesection}{Appendix \oldthesection}
  \section{#1}
  \let\thesection\oldthesection
}

\newcommand{\ignore}[1]{}

\renewcommand{\mit}{\mathit}

\newcommand{\msf}{\mathsf}

\newcommand{\SN}{\msf{SN}}
\newcommand{\SNr}[1]{\SN_{#1}}

\newcommand{\SNrs}[1]{\funap{\SN_{#1}}}
\newcommand{\SNi}{\SN^\infty}

\newcommand{\WN}{\msf{WN}}
\newcommand{\WNr}[1]{\WN_{#1}}

\newcommand{\WNrs}[1]{\funap{\WN_{#1}}}
\newcommand{\WNi}{\WN^\infty}

\newcommand{\SNw}{\SN^\omega}

\newcommand{\CR}{\msf{CR}}
\newcommand{\CRr}[1]{\CR_{#1}}
\newcommand{\CRrs}[1]{\funap{\CRr{#1}}}

\newcommand{\gCR}{\msf{grCR}}

\newcommand{\WCR}{\msf{WCR}}
\newcommand{\WCRr}[1]{\WCR_{#1}}
\newcommand{\WCRrs}[1]{\funap{\WCRr{#1}}}

\newcommand{\gWCR}{\msf{grWCR}}

\newcommand{\DP}{\msf{DP}}
\newcommand{\DPmin}{\msf{DP}^{\msf{min}}}

\newcommand{\tp}{\msf{top}}
\newcommand{\SNdp}[2]{\SN(#1_{\tp}/#2)}
\newcommand{\SNdps}[3]{\SN(#3,#1_{\tp}/#2)}
\newcommand{\SNdpm}[2]{\SN(#1_{\tp}/_{\msf{min}}\,#2)}

\newcommand{\sred}{{\rightarrow}}
\newcommand{\red}{\mathrel{\sred}}
\newcommand{\redrat}[2]{\mathrel{\sred_{#1,#2}}}
\newcommand{\redr}[1]{\mathrel{\sred_{#1}}}
\newcommand{\redrroot}[1]{\redrat{#1}{\posemp}}

\newcommand{\sredi}{{\leftarrow}}
\newcommand{\redi}{\mathrel{\sredi}}
\newcommand{\smred}{{\twoheadrightarrow}}
\newcommand{\mred}{\mathrel{\smred}}
\newcommand{\smredi}{{\twoheadleftarrow}}
\newcommand{\mredi}{\mathrel{\smredi}}
\newcommand{\pairlft}{{\langle}}                %
\newcommand{\pairrgt}{{\rangle}}                %
\newcommand{\pairsep}{{,\,}}                    %
\newcommand{\pairstr}[1]{\pairlft#1\pairrgt}    %
\newcommand{\pair}[2]{\pairstr{#1\pairsep#2}}   %
\newcommand{\triple}[2]{\pair{#1\pairsep#2}}    %
\newcommand{\quadruple}[2]{\triple{#1\pairsep#2}} 

\newcommand{\sjoin}{{\cup}}
\newcommand{\join}{\mathbin{\sjoin}}

\newcommand{\sfunin}{{:}}
\newcommand{\funin}{\mathrel{\sfunin}}
\newcommand{\powerset}[1]{{\boldsymbol 2}^{#1}}

\newcommand{\setemp}{{\varnothing}}%

\newcommand{\mybind}[3]{#1#2.\:#3}
\newcommand{\myex}{\mybind{\exists}}
\newcommand{\myall}{\mybind{\forall}}

\newcommand{\nat}{\mathbb N}
\newcommand{\zz}{\mathbb Z}

\catcode`\@=11

\newcommand{\mywash}[2]{\setbox0=\hbox{$\m@th#1{#2}$}\wd0=0pt\box0}
\catcode`\@=12

\newcommand{\asig}{\Sigma}

\newcommand{\avars}{\mathcal{X}}
\newcommand{\ster}{\mit{Ter}}
\newcommand{\ter}{\funap{\ster}}

\newcommand{\svar}{\mit{Var}}

\newcommand{\vars}{\funap{\svar}}

\newcommand{\atrs}{R}
\newcommand{\btrs}{S}

\newcommand{\posemp}{\epsilon}
\newcommand{\apos}{p}

\newcommand{\pos}{\funap{\mathcal{P}\!os}}

\newcommand{\binap}[3]{#2\mathbin{#1}#3}
\newcommand{\funap}[2]{#1(#2)}

\newcommand{\bfunap}[3]{\funap{#1}{#2,#3}}
\newcommand{\tfunap}[4]{\funap{#1}{#2,#3,#4}}

\newcommand{\where}{\mathrel{|}}

\newcommand{\sdefdby}{{:}{=}}
\newcommand{\defdby}{\mathrel{\sdefdby}}

\renewcommand{\implies}{\Rightarrow}

\newcommand{\sarity}{\sharp}%
\newcommand{\arity}{\funap{\sarity}}

\newcommand{\autstates}{Q}

\newcommand{\astate}{q}

\newcommand{\asubst}{\sigma}
\newcommand{\subst}[2]{#2#1}

\newcommand{\strff}{\msf} %

\newcommand{\lstemp}{\varepsilon} %
\newcommand{\lstconcat}[2]{#1 #2}

\newcommand{\lstcat}[2]{#1#2} %

\newcommand{\lstlength}[1]{|#1|}

\newcommand{\sstrcns}{\strff{:}}

\newcommand{\strcnsd}[1]{\binap{\sstrcns}}

\newcommand{\punc}[1]{\:\text{#1}}

\newcommand{\acontext}{C}
\newcommand{\contexthole}{\Box}
\newcommand{\contextfill}[2]{#1[#2]}

\newcommand{\cpi}[2]{\mathrm{\Pi}^{#1}_{#2}}
\newcommand{\csig}[2]{\mathrm{\Sigma}^{#1}_{#2}}
\newcommand{\tm}{\msf{M}}
\newcommand{\tmstates}{Q}
\newcommand{\tmsig}{\Gamma}
\newcommand{\stmtrans}{\delta}
\newcommand{\tmtrans}{\bfunap{\stmtrans}}
\newcommand{\tmiblank}{\triangleright}

\newcommand{\stmblank}{\Box}
\newcommand{\tmblank}{\funap{\stmblank}}
\newcommand{\tmL}{L}
\newcommand{\tmR}{R}
\newcommand{\stmmap}{\phi}
\newcommand{\tmmap}{\funap{\stmmap}}
\newcommand{\tmterms}[1]{\ster_{#1}}
\newcommand{\stmint}{\Phi}
\newcommand{\tmint}{\funap{\stmint}}
\newcommand{\tmstart}{\astate_0}
\newcommand{\stmtape}{\mit{tape}}
\newcommand{\tmtape}{\funap{\stmtape}}
\newcommand{\tmconf}[1]{\mathcal{C}\!\mit{onf}_{#1}}

\newcommand{\tmfinal}[2]{\funap{\mit{final}_{#2}}{{#1}}}

\newcommand{\stmzer}{\msf{0}}
\newcommand{\tmzer}{\funap{\stmzer}}
\newcommand{\stmsucc}{\msf{S}}
\newcommand{\tmsucc}{\funap{\stmsucc}}

\newcommand{\tmT}{\msf{T}}

\newcommand{\stmpeb}{\bullet}
\newcommand{\tmpeb}{\funap{\stmpeb}}

\newcommand{\tmtrs}[1]{\atrs_{#1}}
\newcommand{\tmtrspeb}[1]{\atrs_{#1}^{\stmpeb}}
\newcommand{\tmstep}{\to_\tm}

\newcommand{\stmfun}[1]{f_{#1}}
\newcommand{\tmfun}[1]{\funap{\stmfun{#1}}}
\newcommand{\tmrel}[1]{\mathrel{>_{#1}}}

\newcommand{\sfs}{\stmsucc}
\newcommand{\fs}{\funap{\sfs}}
\newcommand{\sfc}{\msf{c}}
\newcommand{\fc}{\funap{\sfc}}

\newcommand{\pickn}{\msf{pickn}}
\newcommand{\picknok}{\funap{\msf{ok}}}
\newcommand{\trspickn}{\atrs_{\pickn}}

\newcommand{\pto}{\rightharpoonup}

\newcommand{\WF}{\msf{WF}}

\newcommand{\first}{\funap{\mit{first}}}
\newcommand{\last}{\funap{\mit{last}}}

\usepackage{diagrams}
\newcommand{\REC}{\mbox{REC}}
\newcommand{\lth}{{\rm lth}}
\newcommand{\Seq}{\mbox{Seq}}
\newcommand{\forallt}{\forall^1}
\newcommand{\existst}{\exists^1}
\newcommand{\lem}{\leq_m}
\newcommand{\weg}[1]{}
\def\phi{\varphi}

\pagestyle{plain}
\begin{document}

\title{Degrees of Undecidability in Rewriting}

\author{
  J\"{o}rg Endrullis\inst{1}
  \and Herman Geuvers\inst{2,3}
  \and Hans Zantema\inst{3,2}
}
\institute{
    Vrije Universiteit Amsterdam, The Netherlands\\
    \email{joerg@few.vu.nl}
  \and
    Radboud Universiteit Nijmegen, The Netherlands\\
    \email{herman@cs.ru.nl}
  \and
    Technische Universiteit Eindhoven, The Netherlands\\
    \email{h.zantema@tue.nl}
}
\maketitle

\begin{abstract}
Undecidability of various properties of first order term rewriting
systems is well-known.  An undecidable property can be classified by
the complexity of the formula defining it. This gives rise to a
hierarchy of distinct levels of undecidability, starting from the
arithmetical hierarchy classifying properties using first order
arithmetical formulas and continuing into the analytic hierarchy,
where also quantification over function variables is allowed.

In this paper we consider properties of first order term rewriting systems and 
classify them in this hierarchy. 
Weak and strong normalization for
single terms turn out to be $\csig{0}{1}$-complete, while their uniform versions 
as well as dependency pair problems with minimality flag are $\cpi{0}{2}$-complete.
We find that confluence is $\cpi{0}{2}$-complete both for single terms and uniform.
Unexpectedly weak confluence for ground terms turns out to be harder
than weak confluence for open terms.
The former property is $\cpi{0}{2}$-complete while the latter is $\csig{0}{1}$-complete (and thereby recursively enumerable).

The most surprising result is on dependency pair problems without
minimality flag: we prove this to be $\cpi{1}{1}$-complete, which
means that this property exceeds the arithmetical hierarchy and is
essentially analytic.
\end{abstract}

\section{Introduction}

In classical computability theory a property
$P \subseteq \nat$ is called \emph{decidable}
iff there exists a Turing machine which
for every input $x \in \nat$ outputs $0$ if $x\in P$ and $1$ if $x\notin P$.
The complexity of decidable properties is usually
defined in terms of the time (or space) consumption
of a Turing machine that decides the property;
the respective hierarchies (linear, polynomial, exponential,\ldots) are well-known.
Likewise, but less known, the undecidable properties can be classified into a
hierarchy of growing complexity.  
The arithmetical and the analytical hierarchy establish such a
classification of undecidable properties by the complexity of
predicate logic formulas that define them, which in turn is defined as
the number of quantifier alternations of its prenex normal form.
The \emph{arithmetical hierarchy} is based on first order formulas,
that is, quantification is restricted to number quantifiers, function
or set quantification is not allowed; its classes are denoted
$\cpi{0}{n}$ and $\csig{0}{n}$ for $n \in \nat$.  The lowest level of
the hierarchy, the classes $\cpi{0}{0}$ and $\csig{0}{0}$, consists of
the decidable relations (for which there is a total computable
function that decides it).  Then the classes $\cpi{0}{n}$ and
$\csig{0}{n}$ for $n \ge 1$ are inductively defined by allowing
additional universal and existential quantifiers to define the
properties. For example, if $P(x,y,z)$ is a decidable property, then
$\exists x\,P(x,y,z)$ is in $\csig{0}{1}$ and $\forall y\,\exists x\,
P(x,y,z)$ is in $\cpi{0}{2}$. 
In other words, a relation belongs to the class $\cpi{0}{n}$ for
$n \in \nat$ of the arithmetical hierarchy if it can be defined by a
first order formula (in prenex normal form), which has $n$
quantifiers, starting with a universal quantifier.  Likewise a
relation is in $\csig{0}{n}$ if the formula starts with an existential
quantifier.  The class $\csig{0}{1}$ is the class of {\em recursively
enumerable\/} (or semi-decidable) relations; the
special halting problem is in this class.  The
general halting problem is in the class $\cpi{0}{2}$.

The \emph{analytical hierarchy} continues the classification of
relations by second order formulas,
allowing for function quantifiers.
Its classes are denoted $\cpi{1}{n}$ and $\csig{1}{n}$ for $n \in \nat$.
The lowest level of the analytical hierarchy, that is, 
the classes $\cpi{1}{0}$ and $\csig{1}{0}$ consist of all arithmetical relations.
The classes $\cpi{1}{n}$ and $\csig{1}{n}$ for $n \ge 1$
are defined inductively, each time adding an universal ($\myall{\alpha \funin \nat \to \nat}{\phi}$)
or existential function quantifier ($\myex{\alpha \funin \nat \to \nat}{\phi}$), respecively.
For example the class $\cpi{1}{1}$ consists of relations
which can be defined by $\myall{\alpha \funin \nat \to \nat}{\phi}$ where $\phi$ is an arithmetical relation.

\paragraph{Our Contribution}

We investigate the arithymetic complexity,
of various properties of first order TRSs:
\begin{itemize}
 \item termination or strong normalization ($\SN$),
 \item weak normalization ($\WN$),
 \item confluence ($\CR$) and ground confluence ($\gCR$),
 \item weak confluence ($\WCR$) and weak ground confluence ($\gWCR$),
 \item finiteness of dependency pair problems ($\DP$), and
 \item finiteness of dependency pair problems with minimality flag ($\DPmin$).
\end{itemize}

\begin{figure}[h!]
\vspace{-2ex}
  \begin{center}
    \stdarraystretch
    \begin{tabular}{|c|c|c|c|c|c|c|}
    \hline
    & $\SN$ & $\WN$ & $\CR$ & $\gCR$ & $\WCR$ & $\gWCR$\\
    \hline
    uniform & $\cpi{0}{2}$ & $\cpi{0}{2}$ & $\cpi{0}{2}$ & $\cpi{0}{2}$ & $\csig{0}{1}$ & $\cpi{0}{2}$\\
    \hline
    single term & $\csig{0}{1}$ & $\csig{0}{1}$ & $\cpi{0}{2}$ & $\cpi{0}{2}$ & $\csig{0}{1}$ & $\csig{0}{1}$\\
    \hline
    \end{tabular}
    \quad 
    \begin{tabular}{|c|c|c|}
    \hline
    & $\DP$ & $\DPmin$\\
    \hline
    uniform & $\cpi{1}{1}$ & $\cpi{0}{2}$\\
    \hline
    single term & $\cpi{1}{1}$ & $\csig{0}{1}$\\
    \hline
    \end{tabular}
  \end{center}
\vspace{-2ex}
\caption{Degrees of undecidability}
\label{fig:complexities}
\vspace{-2ex}
\end{figure}

While undecidability of these concepts is
folklore \cite{GMOZ02a}
their degree of undecidability, their
precise hardness, has hardly been studied, with the exception
of \cite{HuetLankford} who study the Turing degree of
termination. Turing degrees give a classification of undecidable
properties in terms of their computational `hardness' which is
independent of the syntactic form of a predicate that describes
it. Their is a connection between the Turing degree of a property and
its place in teh arithmetic hierrachy, so most of the proofs
of \cite{HuetLankford} can be carry over to our setting. As we use a
different translation from Turing machines to TRSs, we do not use the
results or proofs of \cite{HuetLankford}.

In this paper we pinpoint the precise complexities of these properties in terms of the arithmetic (and analytic) hierachy, see Figure~\ref{fig:complexities}; we study these properties 
\emph{uniformly} for all terms (as a system property) as well as
for \emph{single terms}.

We find that the standard TRS properties $\SN$, $\WN$, $\CR$, $\WCR$
reside within the classes $\cpi{0}{2}$ and $\csig{0}{1}$
of the arithmetical hierarchy, for the uniform and single term versions, respectively.
That is, they are of a low degree of undecidability,
being at most as hard as the general halting problem.

Unexpectedly we find 
that weak ground confluence is a harder decision problem than weak confluence.
While weak confluence is $\csig{0}{1}$-complete and therefore recursively enumerable
it turns out that weak ground confluence $\cpi{0}{2}$-complete.

Surprisingly, it turns out that dependency pair problems are of a much
higher degree of undecidability: they exceed the whole arithmetical
hierarchy and thereby first order predicate logic.  In particular
we show that dependency pair problems are $\cpi{1}{1}$-complete, a
class within the analytical hierarchy with one universal function
quantifier. So although dependency pair problems are invented for
proving termination, the complexity of general dependency pair
problems is much higher than the complexity of termination itself.
The same holds for the property $\SNi$ of termination in infinitary
rewriting \cite{KV05}.
 We sketch how by the same argument $\SNi$ can
be concluded to be $\cpi{1}{1}$-complete.

A variant of dependency pair problems are dependency pair problems with minimality flag. We will show that for this
variant the complexity is back to that of termination: it is $\cpi{0}{2}$-complete.

\section{Preliminaries}

\subsection*{Term rewriting}

A \emph{signature $\asig$} is a finite set of symbols 
each having a fixed \emph{arity $\arity{f} \in \nat$}.
Let $\asig$ be a signature and $\avars$ a set of variable symbols such that $\asig \cap \avars = \setemp$.
The \emph{set $\ter{\asig,\avars}$ of terms over $\asig$ and $\avars$} is the smallest set satisfying:
\begin{itemize}
\item $\avars \subseteq \ter{\asig,\avars}$, and
\item $f(t_1,\dots,t_n) \in \ter{\asig,\avars}$ if $f \in \asig$ with arity $n$ and $\forall i: t_i \in \ter{\asig,\avars}$.
\end{itemize}
We use $x,y,z,\ldots$ to range over variables.
We frequently drop $\avars$ and write $\ter{\asig}$ for the set of terms over $\asig$
and a fixed, countably infinite set of variables $\avars$.
The set of positions $\pos{t} \subseteq \nat^*$ of a term $t \in \ter{\asig,\avars}$
is inductively defined by:
$\pos{\funap{f}{t_1,\ldots,t_n}} = 
 \{\lstemp\} \cup \{\lstconcat{i}{\apos} \where 1 \le i \le \arity{f},\, \apos \in \pos{t_i}\}$,
and
$\pos{x} = \{\lstemp\}$ for variables $x \in \avars$.
We use $\equiv$ for syntactical equivalence of terms.

A substitution $\asubst$ is a map $\asubst : \avars \to \ter{\asig,\avars}$ from variables to terms.
For terms $t \in \ter{\asig,\avars}$ and substitutions $\asubst$
we define $\subst{\asubst}{t}$ as the result of replacing each $x \in \avars$
in $t$ by $\funap{\asubst}{x}$.
That is, $\subst{\asubst}{t}$ is inductively defined by
$\subst{\asubst}{x} \defdby \funap{\asubst}{x}$ for variables $x \in \avars$
and otherwise
$\subst{\asubst}{\funap{f}{t_1,\ldots,t_n}} \defdby \funap{f}{\subst{\asubst}{t_1},\ldots,\subst{\asubst}{t_n}}$.
Let $\contexthole$ be a fresh %
symbol, $\contexthole \not\in \asig\join\avars$.
A \emph{context} $\acontext$ is a term from $\ter{\asig,\avars\join\{\contexthole\}}$
containing precisely one occurrence of $\contexthole$.
Then $\contextfill{\acontext}{s}$ denotes the term $\subst{\asubst}{\acontext}$
where $\funap{\asubst}{\contexthole} = s$ and $\funap{\asubst}{x} = x$ for all $x \in \avars$.

A \emph{term rewriting system (TRS)} over $\asig$, $\avars$ is a set $R$
pairs $\pair{\ell}{r} \in \ter{\asig,\avars}$,
called \emph{rewrite rules} and usually written as $\ell \to r$,
for which 
the \emph{left-hand side} $\ell$ is not a variable $\ell \not\in \avars$
and all variables in the \emph{right-hand side} $r$ occur in $\ell$, $\vars{r} \subseteq \vars{\ell}$. 
Let $\atrs$ be a TRS.
For terms $s, t \in \ter{\asig,\avars}$ we write $s \to_{\atrs} t$ 
if there exists a rule $\ell \to r \in \atrs$, a substitution $\asubst$
and a context $\acontext \in \ter{\asig,\avars\join\{\contexthole\}}$
such that
$s \equiv \contextfill{\acontext}{\subst{\asubst}{\ell}}$
and
$t \equiv \contextfill{\acontext}{\subst{\asubst}{r}}$;
$\to_{\atrs}$ is the \emph{rewrite relation} induced by $\atrs$.

\begin{definition}\normalfont\label{def:sn}
  Let $\atrs$ be a TRS 
  and $t \in \ter{\asig,\avars}$ a term.
  Then $\atrs$ is called
  \begin{itemize}
  \item 
  \emph{strongly normalizing (or terminating) on $t$},
  denoted $\SNrs{\atrs}{t}$,\\
  if every rewrite sequence starting from $t$ is finite.
  \item 
  \emph{weakly normalizing on $t$},
  denoted $\WNrs{\atrs}{t}$,\\
  if $t$ admits a rewrite sequence $t \mred s$ to a normal form $s$.
  \item
  \emph{confluent (or Church-Rosser) on $t$},
  denoted $\CRrs{\atrs}{t}$,\\
  if every pair of finite coinitial reductions starting from $t$ can be extended to a common reduct,
  that is, $\myall{t_1,t_2 \in \ter{\asig}}{t_1 \mredi t \mred t_2 \implies \myex{d}{t_1 \mred d \mredi t_2}}$.
  \item
  \emph{weakly confluent (or weakly Church-Rosser) on $t$}, 
  denoted $\WCRrs{\atrs}{t}$,\\
  if every pair of coinitial rewrite steps starting from $t$ can be joined,
  that is, $\myall{t_1,t_2 \in \ter{\asig}}{t_1 \redi t \red t_2 \implies \myex{d}{t_1 \mred d \mredi t_2}}$.
  \end{itemize}
  The TRS $\atrs$ is
  \emph{strongly normalizing} ($\SNr{\atrs}$),
  \emph{weakly normalizing} ($\WNr{\atrs}$),
  \emph{confluent} ($\CRr{\atrs}$)
  or \emph{weakly confluent} ($\WCRr{\atrs}$)
  if the respective property holds on all terms $t \in \ter{\asig,\avars}$.
  We say that $\atrs$
  is \emph{ground confluent}
  (or \emph{ground weakly confluent})
  if $\atrs$ is confluent (or weakly confluent) on all ground terms $t \in \ter{\asig,\setemp}$.
\end{definition}

\subsection*{Turing machines}

\begin{definition}\normalfont\label{def:tm}
  A \emph{Turing machine} $\tm$ is a quadruple $\quadruple{\tmstates}{\tmsig}{\tmstart}{\stmtrans}$
  consisting of:
  \begin{itemize}
  \item finite set of states $\tmstates$,
  \item an initial state $q_0 \in \tmstates$,
  \item a finite alphabet $\tmsig$ containing a designated symbol $\stmblank$, called \emph{blank}, and
  \item a partial \emph{transition function} 
        $\stmtrans \funin \autstates \times \tmsig \to \tmstates \times \tmsig \times \{\tmL,\tmR\}$.
  \end{itemize}
  A \emph{configuration} of a Turing machine is a pair $\pair{\astate}{\stmtape}$
  consisting of a state $\astate \in \tmstates$
  and the tape content $\stmtape \funin \zz \to \tmsig$
  such that the carrier $\{n \in \zz \where \tmtape{n} \ne \stmblank\}$ is finite.
  The set of all configurations is denoted $\tmconf{\tm}$.
  We define the relation $\tmstep$ on the set of configurations $\tmconf{\tm}$ as follows:
  $\pair{\astate}{\stmtape} \tmstep \pair{\astate'}{\stmtape'}$
  whenever:
  \begin{itemize}
   \item
     $\tmtrans{\astate}{\tmtape{0}} = \triple{\astate'}{f}{\tmL}$,
     $\funap{\stmtape'}{1} = f$ and $\myall{n \ne 0}{\funap{\stmtape'}{n+1} = \tmtape{n}}$, or
   \item
     $\tmtrans{\astate}{\tmtape{0}} = \triple{\astate'}{f}{\tmR}$,
     $\funap{\stmtape'}{-1} = f$ and $\myall{n \ne 0}{\funap{\stmtape'}{n-1} = \tmtape{n}}$.
  \end{itemize}
\end{definition}
Without loss of generality we assume that $\tmstates \cap \tmsig = \setemp$,
that is, the set of states and the alphabet are disjoint.
This enables us to denote configurations
as $\triple{w_1}{\astate}{w_2}$,
denoted $w_1^{-1} \astate w_2$ for short,
with $w_1,w_2 \in \tmsig^\infty$ and $\astate \in \tmstates$,
which is shorthand for $\pair{\astate}{\stmtape}$
where $\tmtape{n} = \funap{w_2}{n+1}$ for $0 \le n < \lstlength{w_2}$,
and $\tmtape{-n} = \funap{w_1}{n}$ for $1 \le n \le \lstlength{w_1}$
and $\tmtape{n} = \stmblank$ for all other positions $n \in \zz$.

The Turing machines we consider are deterministic.
As a consequence, final states are unique (if they exist),
which justifies the following definition.

\begin{definition}\normalfont
  Let $\tm$ be a Turing machine and $\pair{\astate}{\stmtape} \in \tmconf{\tm}$.
  We denote by $\tmfinal{\pair{\astate}{\stmtape}}{\tm}$
  the $\tmstep$-normal form of $\pair{\astate}{\stmtape}$
  if it exists and undefined, otherwise.
  Whenever $\tmfinal{\pair{\astate}{\stmtape}}{\tm}$ exists
  then we say that \emph{$\tm$ halts on $\pair{\astate}{\stmtape}$
  with final configuration $\tmfinal{\pair{\astate}{\stmtape}}{\tm}$}.
  Furthermore we say \emph{$\tm$ halts on $\stmtape$} as shorthand for
  \emph{$\tm$ halts on $\pair{\tmstart}{\stmtape}$}.
\end{definition}

Turing machines can compute $n$-ary functions $f \funin \nat^n \to \nat$
or relations $S \subseteq \nat^*$. 
We need only unary functions $\stmfun{\tm}$ and binary ${\tmrel{\tm}} \subseteq \nat \times \nat$ relations.

\begin{definition}\label{def:funcrel}\normalfont
  Let $\tm = \quadruple{\tmstates}{\tmsig}{\tmstart}{\stmtrans}$ be a Turing machine with $\stmsucc,\stmzer \in \tmsig$.
  We define a partial function $\stmfun{\tm} \funin \nat \pto \nat$ for all $n \in \nat$ by:
  \begin{align*}
  \tmfun{\tm}{n}
  = \begin{cases}
      m & \text{if }
      \tmfinal{\tmstart \stmsucc^{n} \stmzer}{\tm}
      = \ldots \astate \stmsucc^m \stmzer \ldots\\
      \text{undefined} & \text{otherwise}
    \end{cases}
  \end{align*}
  and for $\tm$ total (i.e.\ $\tm$ halts on all tapes) we define the binary relation ${\tmrel{\tm}} \subseteq \nat \times \nat$ by:
  \begin{align*}
  n \tmrel{\tm} m \;\Longleftrightarrow\;
    \tmfinal{\stmzer \stmsucc^{n} \tmstart \stmsucc^{m} \stmzer}{\tm}
    = \ldots \astate \stmzer \ldots
  \punc.
  \end{align*}
\end{definition}
Note that, the set $\{\,\tmrel{\tm} \where \tm \text{ a Turing machine that halts on all tapes}\,\}$
is the set of recursive binary relations on $\nat$.

\subsection*{The arithmetic and analytical hierachy}

In the introduction we briefly mentioned the arithmetical and analytical hierarchy. We now summarize the main notions and results relevant for this paper. For details see a standard text on mathematical logic, e.g.\ \cite{shoenfield:1967} or \cite{hinman:1978}, which contains more technical results regarding these hierrarchies.

\begin{definition}\normalfont\label{def:membership}
  Let $A \subset \nat$. The \emph{set membership problem for $A$}
  is the problem of deciding for given $a \in \nat$ whether $a \in A$.
\end{definition}

\begin{definition}\normalfont\label{def:reduce}
  Let $A \subseteq \nat$ and $B \subseteq \nat$.
  Then \emph{$A$ can be many-one reduced to $B$}, notation $A\lem B$
  if there exists a total computable function $f \funin \nat \to \nat$
  such that $\myall{n \in \nat}{n \in A \Leftrightarrow \funap{f}{n} \in B}$.
\end{definition}

\begin{definition}\normalfont\label{def:hard}
  Let $B \subseteq \nat$ and $\mathcal{P} \subseteq \powerset{\nat}$.
  Then $B$ is called $\mathcal{P}$-\emph{hard} if every $A \in \mathcal{P}$
  can be reduced to $B$,
  and $B$ is $\mathcal{P}$-\emph{complete} whenever additionally $B \in \mathcal{P}$.
\end{definition}

So a problem $B$ is $\mathcal{P}$-\emph{hard} if every problem $A \in
\mathcal{P}$ can be reduced to $B$: To decide ``$n\in A$'' we only
have to decide ``$f(n) \in B$'', where $f$ is the total computable
function that reduces $A$ to $B$.

The classification results in the following sections employ the
following well-known lemma, which states that whenever a problem $A$
can be reduced via a computable function to a problem $B$, then $B$ is
at least as hard as $A$.

\begin{lemma}\label{lem:reduce}
  If $A$ can be reduced to $B$
  and $A$ is $\mathcal{P}$-\emph{hard}, then $B$ is $\mathcal{P}$-\emph{hard}.
  \qed
\end{lemma}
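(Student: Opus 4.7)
The plan is to observe that the lemma is essentially a transitivity statement about many-one reducibility, so the argument reduces to composing the two reduction functions. The only ingredient needed beyond the definitions is that the composition of two total computable functions is again total computable, which is a standard fact and requires no further machinery here.

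First I would unfold the hypotheses. Since $A \lem B$, by Definition \ref{def:reduce} there is a total computable $f \funin \nat \to \nat$ with $n \in A \Leftrightarrow \funap{f}{n} \in B$ for all $n \in \nat$. To prove $\mathcal{P}$-hardness of $B$, I take an arbitrary $C \in \mathcal{P}$ and must exhibit a total computable reduction from $C$ to $B$. Because $A$ is $\mathcal{P}$-hard, there is a total computable $g \funin \nat \to \nat$ such that $n \in C \Leftrightarrow \funap{g}{n} \in A$.

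Then I would consider the composition $h \defdby f \funcmp g$. Since $f$ and $g$ are both total computable, so is $h$ (one may compose the Turing machines for $f$ and $g$ in the obvious way: run the machine for $g$ on input $n$, then run the machine for $f$ on its output). Chaining the two equivalences gives
\[
n \in C \;\Longleftrightarrow\; \funap{g}{n} \in A \;\Longleftrightarrow\; \funap{f}{\funap{g}{n}} \in B \;\Longleftrightarrow\; \funap{h}{n} \in B,
\]
which establishes $C \lem B$. As $C$ was arbitrary in $\mathcal{P}$, this shows that $B$ is $\mathcal{P}$-hard.

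There is no real obstacle in this proof; the only subtlety worth flagging is that \emph{totality} of the reduction functions is essential, since only then is the composition guaranteed to be defined everywhere and hence to qualify as a many-one reduction in the sense of Definition \ref{def:reduce}.
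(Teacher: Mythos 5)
Your proof is correct and is exactly the standard composition-of-reductions argument that the paper implicitly relies on (the paper states the lemma as well known and gives no proof). Your remark about totality being needed for the composition to qualify as a many-one reduction is accurate and appropriately flagged.
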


\begin{remark}\label{rem:encoding}
Finite lists of natural numbers can be encoded as natural numbers
using the well-known G\"odel encoding: $\langle n_1, \ldots,
n_k\rangle := p_1^{n_1+1}\cdot \ldots p_k^{n_k+1}$, where $p_1,
\ldots, p_k$ are the first $k$ prime numbers. For this encoding, the
length function ($\lth\langle n_1, \ldots, n_k\rangle = k$) and the
decoding function ($\lth\langle n_1, \ldots, n_k\rangle_i = n_i$ if
$1\leq i\leq n$) are computable and it is decidable if a number is the
code of a finite list $\Seq(n)$.
\end{remark}

Using the encoding of finite lists of natural numbers, we can encode
Turing machines, terms and finite term rewriting systems.  The
following, known as Kleene's $T$-predicate, is a well-known decidable
problem: $T(m,\langle x\rangle,u) := m$ {encodes a Turing
  Machine} $M$, $u$ {encodes the} {computation of} $M$ {on} $x$
{whose end result is} $(u)_{\lth(u)}$.

An example from term rewriting that we can encode as a problem on
natural numbers is (we leave the encoding of terms as numbers
implicit), $s \to_{\atrs} t:= \exists \ell \to r \in \atrs\, \exists
\asubst\,\exists \acontext \,(s \equiv
\contextfill{\acontext}{\subst{\asubst}{\ell}}\wedge t \equiv
\contextfill{\acontext}{\subst{\asubst}{r}})$.  As all these
quantifiers are bounded (amounting to a finite search), this is a
decidable problem. Note that the fact that the TRS is finite and thus
finitely branching is crucial here.

Undecidable problems can be divided into a hierarchy of increasing
complexity, the first part of which is known as the {\em arithmetical
  hierarchy}. An example is the problem whether $t$ reduces in finitely
many steps to $q$: $t\mred_{\atrs} q := \exists \langle s_1, \ldots,
s_n\rangle ( t= s_1 \to_{\atrs} \ldots \to_{\atrs} s_n = q)$. This
problem is undecidable in general and it resides in the class
$\csig{0}{1}$, which is the class of problems of the form $\exists x\in
\nat\, P(x,n)$ where $P(x,n)$ is a decidable problem. (We usually
suppress the domain behind the existential quantifier.) Due to the
encoding of a finite list of numbers into numbers, a sequence of $\exists$
can always be replaced by one.

Similar to $\csig{0}{1}$, we have the class $\cpi{0}{1}$, which is the
class of problems of the form $\forall x\in \nat\, P(x,n)$ with
$P(x,n)$ a decidable problem. If we continue this procedure, we obtain
the classes $\csig{0}{n}$ and $\cpi{}{n}$ for every $n\in\nat$.

\begin{definition}\normalfont\label{def:arithclasses}
  $\csig{0}{n}$ is the class of problems of the form\\
 $A(k) = \exists
  x_n \forall x_{n-1} \ldots P(x_1, \ldots , x_n, k)$ where $P$ is
  decidable. So, there is a sequence of $n$ alternating quantifiers in
  front of $P$. 
$\cpi{0}{n}$ is the class of problems of the form
  $A(k) = \forall x_n \exists x_{n-1} \ldots P(x_1, \ldots , x_n, k)$
  where $P$ is decidable.
$\Delta^0_n := \csig{0}{n} \cap \cpi{0}{n}$
\end{definition}

That this definition is useful is based on the following fact, for
which refer to \cite{rogers:1967,hinman:1978,shoenfield:1967} for a
proof and further details.

\begin{remark}\label{remark:arith}
Every formula in first order arithmetic is equivalent to a formula in
{\em prenex normal form}, i.e.\ a formula with all quantifiers on the
outside of the formula.\\
For every formula of the form $\exists n \exists m \phi$  there is an equivalent formula of the form $\exists p \phi'$, where $\phi'$ has the same quantifier structure as $\phi$. Similarly, for every formula of the form $\forall n \forall m \phi$  there is an equivalent formula of the form $\forall p \phi'$, where $\phi'$ has the same quantifier structure as $\phi$.
\end{remark}

The reason one writes $0$ as a superscript is that all quantifiers
range over ``the lowest type'' $\nat$; there are no quantifiers of
higher types, like $\nat\rightarrow\nat$.  
So every arithmetical problem
is in one of the
classes of Definition \ref{def:arithclasses}. A natural question is
whether all these classes are distinct. A fundamental result in
mathematical logic says that they are, see \cite{shoenfield:1967}, \cite{rogers:1967} or
\cite{hinman:1978}.

\begin{lemma} \label{lem.arithhierprops}
$\REC =\Delta^0_1$ and for all $n\in\nat$, $\Delta^0_n \subsetneq
  \csig{0}{n} \subsetneq \Delta^0_{n+1}$ and $\Delta^0_n \subsetneq
  \cpi{0}{n} \subsetneq \Delta^0_{n+1}$. For all $n\in\nat$ and all $A\subset \nat$, $A\in \csig{0}{n} \Leftrightarrow \overline{A}\in \cpi{0}{n}$.
\end{lemma}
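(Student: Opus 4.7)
My plan is to dispose of the easy equalities/inclusions first, then the complementation clause, and finally the strict inclusions by a diagonal argument. For $\REC = \Delta^0_1$, one direction is immediate: every decidable $A$ can be written both as $\exists x\,(x=0 \wedge A(n))$ and as $\forall x\,(x=0 \to A(n))$, so $\REC \subseteq \Delta^0_1$. For the converse I would invoke Post's theorem: if $A \in \csig{0}{1}$ then $A$ is recursively enumerable via the $T$-predicate (search for a computation witness), and if additionally $A \in \cpi{0}{1}$ then $\overline{A}$ is also r.e.; a simultaneous semi-decision procedure then decides $A$. The inclusions $\Delta^0_n \subseteq \csig{0}{n}$ and $\Delta^0_n \subseteq \cpi{0}{n}$ hold by definition, and $\csig{0}{n}, \cpi{0}{n} \subseteq \Delta^0_{n+1}$ follows from prefixing a vacuous dual quantifier (combined with Remark~\ref{remark:arith} to contract adjacent like quantifiers).

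For the complementation clause $A \in \csig{0}{n} \Leftrightarrow \overline{A} \in \cpi{0}{n}$, I would push negation through the prenex block: if $A(k) \equiv \exists x_n \forall x_{n-1}\cdots P(\vec x,k)$ then $\overline A(k) \equiv \forall x_n \exists x_{n-1}\cdots \neg P(\vec x,k)$, and $\neg P$ remains decidable. This gives $\csig{0}{n}$ and $\cpi{0}{n}$ as mutual de~Morgan duals.

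The strict inclusions are the real content. The standard route is a universal-set/diagonal argument. Using the G\"odel encoding of Remark~\ref{rem:encoding} I would define, for each $n$, a predicate $U_n(e,k)$ that is universal for $\csig{0}{n}$: namely $U_n(e,k) \equiv \exists x_n \forall x_{n-1} \cdots Q_n T(e,\langle k,x_1,\ldots,x_n\rangle, y)$ where the innermost decidable kernel is built from Kleene's $T$-predicate so that as $e$ ranges over indices one obtains every $\csig{0}{n}$ set. Defining $D_n := \{k \mid U_n(k,k)\}$, one has $D_n \in \csig{0}{n}$ by construction. If $D_n$ were in $\cpi{0}{n}$ then $\overline{D_n} \in \csig{0}{n}$ by the complementation clause, so $\overline{D_n} = \{k \mid U_n(e_0,k)\}$ for some index $e_0$; evaluating at $k=e_0$ yields $U_n(e_0,e_0) \Leftrightarrow \neg U_n(e_0,e_0)$, a contradiction. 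Hence $D_n \in \csig{0}{n} \setminus \cpi{0}{n}$, which gives $\Delta^0_n \subsetneq \csig{0}{n}$ and, by complementation, $\Delta^0_n \subsetneq \cpi{0}{n}$. The strict inclusion $\csig{0}{n} \subsetneq \Delta^0_{n+1}$ then follows because $\overline{D_n} \in \cpi{0}{n} \subseteq \Delta^0_{n+1}$ but $\overline{D_n} \notin \csig{0}{n}$; symmetrically for $\cpi{0}{n} \subsetneq \Delta^0_{n+1}$.

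The main obstacle is setting up the universal predicate $U_n$ so that it genuinely parametrises all $\csig{0}{n}$ sets; this requires invoking the (standard but nontrivial) fact that $\csig{0}{n}$ coincides with the sets of the form $\exists y\,R(\vec k,y)$ with $R \in \cpi{0}{n-1}$, together with a uniform Kleene-style normal form. Since these facts are classical and well-documented in the cited references \cite{shoenfield:1967,rogers:1967,hinman:1978}, I would cite them rather than reprove them, and keep the diagonal step explicit since it is the conceptual heart of the separation.
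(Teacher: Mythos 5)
Your proposal is correct: the easy inclusions via dummy quantifiers, complementation by De~Morgan duality, $\REC=\Delta^0_1$ by running the two semi-decision procedures in parallel, and the strict inclusions via a universal $\csig{0}{n}$ predicate plus diagonalization are exactly the classical hierarchy-theorem argument. The paper itself gives no proof of this lemma but defers entirely to the cited references \cite{shoenfield:1967,rogers:1967,hinman:1978}, and your sketch is precisely the standard proof found there, so the two approaches coincide.
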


The arithmetic hierarchy is usually depicted as in Figure
\ref{fig:arithhier}, where every arrow denotes a proper
inclusion. Schematically one usually writes $\exists \REC$ for
$\csig{0}{1}$, $\forall\exists\REC$ for $\cpi{0}{2}$, etc. All classes are
closed under bounded quantification: if $A(n) \Leftrightarrow \exists
y< t(n)\,P(n,y)$ and $P$ is decidable, then $A$ is decidable (and
similarly for other classes in the hierarchy). To put it more
succinctly: $\forall < \mathcal{P} = \mathcal{P}$ for all classes
$\mathcal{P}$ in the arithmetic hierarchy.

\begin{figure}
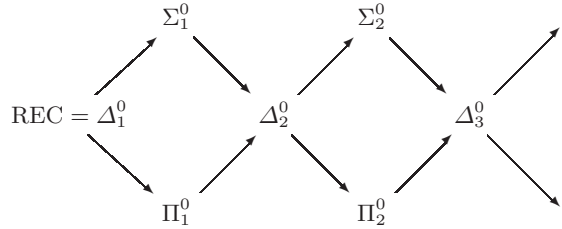

$$\begin{diagram}[height=2em,width=2em]
& & \csig{0}{1} & & & & \csig{0}{2} & & & &\\ 
& \NE & &    \SE & & \NE & & \SE & & \NE &\\ 
\REC = \Delta^0_1& & & &\Delta^0_2 & & & & \Delta^0_3&&\\ 
&\SE & & \NE & & \SE & & \NE & & \SE &\\ 
& & \cpi{0}{1} & & & & \cpi{0}{2} & & & &\\
\end{diagram}$$
\caption{Arithmetic Hierarchy\label{fig:arithhier}}
\end{figure}

To determine if a problem $A$ is essentially in a certain class
$\mathcal{P}$ (and not lower in the hierarchy), we first show that $A$
can be expressed with a formula of $\mathcal{P}$. This shows that $A$
is in $\mathcal{P}$ or lower. To prove that $A$ is not lower, we then
prove that $A$ is $\mathcal{P}$-complete.

Above the arithmetic hierrachy, we find the {\em analytic hierarchy},
where we also allow quantification over infinite sequences of
numbers. As variables ranging over infinite sequences we use $\alpha$,
$\beta$, etc. An example of an analtyical formula is $\forall \alpha
(\forall x(\alpha(x)\mred_{\atrs} \alpha(x+1)) \rightarrow \exists
x(\alpha(x) = \alpha(x+1)))$, stating that the rewrite system is
$\SN$. This is a $\cpi{1}{1}$-formula. In Section \ref{sec:termination} we will see that we can express $\SN$ for TRSs with a formula that is much lower in the hierarchy: it is $\cpi{0}{2}$. The proof essentially uses the fact that TRSs are finitely branching.

The class $\cpi{1}{1}$ is the class of problems of the form $\forall
\alpha\, \exists x\, P(n,\alpha,x)$, where $P$ decidable. Similarly
$\csig{1}{1}$ is the class of problems of the form $\exists \alpha\,
\forall x\, P(n,\alpha,x)$, where $P$ is decidable.  For analytical
problems we also have all kinds of simplification procedures
(analoguous to the ones of Remark \ref{remark:arith}).

\begin{lemma} In the analytical hierarchy we have the following ways of simplifying a sequence of quantifiers:
$$
{\forallt \forallt} \mapsto {\forallt}\:\: 
{\forall} \mapsto {\forallt}\:\:
{\exists \forallt}\mapsto {\forallt\exists}\:\:
{\forall\existst} \mapsto {\existst\forall}$$
\end{lemma}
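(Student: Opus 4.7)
My plan is to dispatch the four simplifications by small coding and choice arguments, taking them in increasing difficulty.

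For $\forallt\forallt \mapsto \forallt$, I would pair two function quantifiers into one: given $\forall \alpha \forall \beta\, \phi$, introduce $\gamma \funin \nat \to \nat$ with $\gamma(2n) \defdby \alpha(n)$ and $\gamma(2n{+}1) \defdby \beta(n)$, and rewrite every occurrence of $\alpha(n)$ and $\beta(n)$ in $\phi$ as $\gamma(2n)$ and $\gamma(2n{+}1)$ respectively. Because the induced map $(\alpha, \beta) \mapsto \gamma$ is a bijection $(\nat \to \nat)^2 \to (\nat \to \nat)$, quantification over all pairs $(\alpha, \beta)$ is equivalent to quantification over all $\gamma$. For $\forall \mapsto \forallt$, I would use the trivial embedding $\forall n\, \phi(n) \Leftrightarrow \forall \alpha\, \phi(\alpha(0))$, which works because $\alpha(0)$ exhausts $\nat$ as $\alpha$ ranges over $\nat \to \nat$.

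For $\forall\existst \mapsto \existst\forall$, I would appeal to countable Skolemisation. Given $\forall n\, \exists \alpha\, \phi(n, \alpha)$, for each $n$ pick a witness function $\alpha_n$ and collect the countable family $(\alpha_n)_{n \in \nat}$ into a single $\beta$ via $\beta(\pair{n}{m}) \defdby \alpha_n(m)$; write $(\beta)_n$ for the $n$-th section $m \mapsto \beta(\pair{n}{m})$. This yields $\forall n\, \exists \alpha\, \phi(n, \alpha) \Leftrightarrow \exists \beta\, \forall n\, \phi(n, (\beta)_n)$, where the right-to-left direction is immediate (take $\alpha \defdby (\beta)_n$ for each $n$). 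The remaining case $\exists\forallt \mapsto \forallt\existst$ then follows by dualisation: apply the previous equivalence to $\neg \phi$ inside $\forall n\, \exists \alpha\, \neg \phi$ and negate both sides, obtaining $\exists n\, \forall \alpha\, \phi(n, \alpha) \Leftrightarrow \forall \beta\, \exists n\, \phi(n, (\beta)_n)$.

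The main obstacle is the Skolemisation step, which formally depends on countable choice over $\nat \to \nat$. In the classical set-theoretic metatheory in which the analytical hierarchy is standardly developed this is unproblematic and I would simply invoke it; in a constructive or predicative setting the equivalence would have to be weakened to an inclusion between definability classes, but this does not affect the applications in the paper. A minor syntactic point to handle carefully is that after each rewriting step the new matrix remains arithmetical in the quantified function(s): the substitutions $n \mapsto \gamma(2n)$, $n \mapsto \alpha(0)$, and the projection $(\beta)_n(m) = \beta(\pair{n}{m})$ are all primitive recursive in their free variables, so the resulting body is still decidable relative to the function parameters, as required for placement in the hierarchy.
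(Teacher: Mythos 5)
Your proposal is correct and takes essentially the same route as the paper: the paper only sketches the first simplification, using the same idea of coding two function quantifiers into one (via value-pairing where you use interleaving, an inessential difference), and defers the remaining cases to the standard literature, which is exactly the countable-Skolemisation and dualisation arguments you supply. Your care that the rewritten matrix stays arithmetical and your caveat about countable choice are both appropriate for the classical metatheory in which the paper works.
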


For the first two simplifications, we of course have the analogous
versions with $\exists$. For the proof we refer to the standard
literature; here we just give a rough idea.  The meaning of the
first simplification is that a formula $\forallt \alpha \forallt \beta
\,\phi(\alpha,\beta)$ is equivalent to a formula of the form $\forall
\gamma \psi(\gamma)$, with $\psi$ in the same class as $\phi$. (Just
take $\psi(\gamma) := \phi((\gamma)_1, \gamma)_2))$, where
$(\gamma)_1$ denotes the sequence with $(\gamma)_1(n) =
(\gamma(n))_1$.) The meaning of the other simplifications should be
clear and from these simplifications one derives that each analytic
formula is equivalent to one of the form $Q_n \alpha_n Q_{n-1}
\alpha_{n-1} \ldots Q_0 x\,P(\alpha_1, \ldots , \alpha_n, x, k)$ where
$P$ is decidable and $\vec{Q}$ is a sequence of alternating
quantifiers. Any analytical problem can be written in this form.

\begin{definition}\normalfont\label{def:analyclasses}
  The analytical problems are the ones of the form\\ 
$Q_n
  \alpha_n Q_{n-1} \alpha_{n-1} \ldots Q_0 x\,P(\alpha_1, \ldots , \alpha_n, x, k)$ with $P$ is
  decidable and $\vec{Q}$ is a sequence of alternating quantifiers. If $n>0$ and $Q_n = \existst$, then it is in the class $\csig{1}{1}$. If $n>0$ and $Q_n = \forallt$, then it is in the class $\cpi{1}{1}$.
$\Delta^1_n := \Sigma^1_n \cap \Pi^1_n$
\end{definition}

For the analytical hierarchy we can draw a same diagram as the one in
Figure~\ref{fig:arithhier}: replace $\csig{0}{1}$ by $\csig{1}{1}$
etc. We have the same results as Lemma \ref{lem.arithhierprops}: each
class is a proper subclass of the ones above it. The whole arithmetic
hierarchy is also a proper subclass of the lowest class, $\Delta^1_1$.

\begin{lemma}\label{lem:complete}
  We have the following well-known
  results:
  \begin{enumerate}
    \item the \emph{special halting problem} $\{\,\tm \where \tm \text{ halts on the blank tape}\,\}$ is $\csig{0}{1}$-complete,
    \item the \emph{general halting problem} $\{\,\tm \where \tm \text{ halts on all inputs}\,\}$ is $\cpi{0}{2}$-complete,
    \item the \emph{totality problem} $\{\,\tm \where \text{$\tm$ halts on $\tmstart \stmsucc^n$ for every $n \in \nat$}\,\}$ is $\cpi{0}{2}$-complete,
    \item the set $\WF \defdby \{\,\tm \where {\tmrel{\tm}} \text{ is well-founded}\,\}$ is $\cpi{1}{1}$-complete.
  \end{enumerate}
\end{lemma}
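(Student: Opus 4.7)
The plan is to treat membership and hardness separately for each of the four items. For membership, we exhibit a formula in the required class using Kleene's $T$-predicate from Remark~\ref{rem:encoding}; for hardness, we reduce a generic problem of the class to the given one by constructing an appropriate Turing machine from the parameters of the problem.

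For (i), membership is immediate: $\tm$ halts on the blank tape iff $\exists u\, T(\coding{\tm},\coding{\lstemp},u)$, which is $\csig{0}{1}$. For hardness, let $A \in \csig{0}{1}$ be arbitrary, so $n \in A \Leftrightarrow \exists x\, P(n,x)$ with $P$ decidable. From $n$ we computably construct a Turing machine $\tm_n$ that, starting from the blank tape, enumerates candidate $x$ and halts as soon as $P(n,x)$ holds; then $\tm_n$ halts on the blank tape iff $n \in A$. By Lemma~\ref{lem:reduce} this yields the required reduction. For (ii), membership is $\forall x\,\exists u\, T(\coding{\tm},x,u)$, which is $\cpi{0}{2}$ (up to pairing of quantifiers as in Remark~\ref{remark:arith}). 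For hardness, given $A(n) = \forall x\,\exists y\, P(n,x,y)$ with $P$ decidable, build $\tm_n$ that on input $x$ searches for $y$ with $P(n,x,y)$; then $\tm_n$ halts on all inputs iff $n \in A$. Item (iii) follows by the same construction as (ii): the unary inputs $\tmstart \stmsucc^m$ are in computable bijection with $\nat$, so the reduction from (ii) can be adapted, and membership is analogous.

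For (iv), membership in $\cpi{1}{1}$ is expressed by saying that there is no infinite descending chain,
\[
  \tm \in \WF \;\Longleftrightarrow\; \forall\alpha\funin\nat\to\nat\,\exists n\in\nat\,\neg(\alpha(n{+}1) \tmrel{\tm} \alpha(n))\punc,
\]
and since $\tm$ is assumed total the binary predicate $\alpha(n{+}1) \tmrel{\tm} \alpha(n)$ is recursive in $\alpha$, so the matrix is arithmetical (in fact decidable with oracle $\alpha$); applying the simplification $\exists\forall\to\forall\exists$ backwards this is of the shape $\forall\alpha\,\exists n\,P$. Hardness is the classical result that the index set of well-founded recursive linear (or binary) relations is $\cpi{1}{1}$-complete; the standard route is to reduce from the $\cpi{1}{1}$-complete set of codes of well-founded recursive trees on $\nat$ (Kleene) via the Kleene--Brouwer ordering of the tree, which converts a well-founded tree into a well-founded recursive linear ordering and is realised by a total computable function turning a tree index into the code of a total Turing machine computing the order. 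We can then quote this reduction from the standard literature (e.g.\ \cite{rogers:1967,hinman:1978}) and compose it with a straightforward construction that turns a recursive linear ordering on $\nat$ into a machine $\tm$ with ${\tmrel{\tm}}$ equal to that ordering.

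The main obstacle is (iv): the forward direction relies on the distinction that $\tmrel{\tm}$ is honestly recursive (because $\tm$ halts on all tapes), which is what prevents the formula from jumping to a higher class; and the hardness direction relies on the Kleene--Brouwer construction, which is the only nontrivial piece and is where we lean explicitly on the standard literature rather than reproving it. Items (i)--(iii) are entirely routine and will only require a sentence each in the final write-up.
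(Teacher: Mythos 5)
The paper states this lemma without proof, presenting all four items as well-known facts and deferring to the standard references (\cite{rogers:1967,hinman:1978,shoenfield:1967}); there is nothing in the text to compare your argument against step by step. Your reconstruction is correct and entirely standard: items (i)--(iii) are handled by the usual Kleene $T$-predicate formulas and dovetailing constructions, and for item (iv) you correctly note the one real subtlety --- that $\tmrel{\tm}$ is decidable only because $\tm$ is total, which keeps the matrix arithmetical and the whole formula in $\cpi{1}{1}$ --- while deferring the Kleene--Brouwer hardness argument to the literature, exactly as the paper itself does.
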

These sets will be the basis for the hardness results in the following
sections: we will show that $\{\,\tm \where \tm \text{ halts on the
  blank tape}\,\}$ is many-one reducible to ``$\WN$ for a single
term'' and thus conclude that ``$\WN$ for a single term'' is
$\csig{0}{1}$. This will be done by effectively giving for every
Turing machine $\tm$, a TRS $R_\tm$ and a term $t_\tm$ such that
$$\tm \text{ halts on the blank tape \ \ iff\ \  }\WN_{R_\tm}(t_\tm)$$
Similar constructions will be carried out for the other problems that we consider.

\section{Strong and Weak Normalization} \label{sec:termination}
We use the translation of Turing machines $\tm$ to TRSs $\tmtrs{\tm}$ from \cite{jw:handbook}.
\begin{definition}\normalfont\label{def:tmtrs}
  For every Turing machine $\tm = \quadruple{\tmstates}{\tmsig}{\tmstart}{\stmtrans}$
  we define a TRS $\tmtrs{\tm}$ as follows.
  The signature is $\asig = \tmstates \cup \tmsig \cup \{\tmiblank\}$
  where the symbols $\astate \in \tmstates$ have arity 2,
  the symbols $f \in \tmsig$ have arity 1
  and $\tmiblank$ is a constant symbol, 
  which represents an infinite number of blank symbols.
  The rewrite rules of $\tmtrs{\tm}$ are:
  \begin{align*}
    \bfunap{\astate}{x}{\funap{f}{y}} &\to \bfunap{\astate}{\funap{f'}{x}}{y}
    &&\text{ for every }\tmtrans{\astate}{f} = \triple{\astate'}{f'}{\tmR}\\
    \bfunap{\astate}{\funap{g}{x}}{\funap{f}{y}} &\to \bfunap{\astate}{x}{\funap{g}{\funap{f'}{y}}}
    &&\text{ for every }\tmtrans{\astate}{f} = \triple{\astate'}{f'}{\tmL}
  \end{align*}
  together with four rules for `extending the tape':
  \begin{align*}
    \bfunap{\astate}{\tmiblank}{\funap{f}{y}} &\to \bfunap{\astate}{\tmiblank}{\tmblank{\funap{f'}{y}}}
    &&\text{ for every }\tmtrans{\astate}{f} = \triple{\astate'}{f'}{\tmL}\\
    \bfunap{\astate}{x}{\tmiblank} &\to \bfunap{\astate}{\funap{f'}{x}}{\tmiblank}
    &&\text{ for every }\tmtrans{\astate}{\stmblank} = \triple{\astate'}{f'}{\tmR}\\
    \bfunap{\astate}{\funap{g}{x}}{\tmiblank} &\to \bfunap{\astate}{x}{\funap{g}{\funap{f'}{\tmiblank}}}
    &&\text{ for every }\tmtrans{\astate}{\stmblank} = \triple{\astate'}{f'}{\tmL}\\
    \bfunap{\astate}{\tmiblank}{\tmiblank} &\to \bfunap{\astate}{\tmiblank}{\tmblank{\funap{f'}{\tmiblank}}}
    &&\text{ for every }\tmtrans{\astate}{\stmblank} = \triple{\astate'}{f'}{\tmL}
    \punc.
  \end{align*}
\end{definition}

We introduce a mapping from terms to configurations
to make the connection between 
the $\tm$ and the TRS $\tmtrs{\tm}$ precise.

\begin{definition}\normalfont\label{def:tmtrsmap}
  We define a mapping $\stmmap \funin \ter{\tmsig \cup \{\tmiblank\},\setemp} \pto \tmsig^*$ by:
  \begin{align*}
  \tmmap{\tmiblank} &\defdby \lstemp &
  \tmmap{\funap{f}{t}} \defdby \lstcat{s}{\tmmap{t}}
  \end{align*}
  for every $f \in \tmsig$ and $t \in \ter{\tmsig \cup \{\tmiblank\},\setemp}$.
  We define the set (intended) terms:
  \[\tmterms{\tm} \defdby \{\funap{\astate}{s,t} \where \astate \in \tmstates,\;s,t \in \ter{\tmsig \cup \{\tmiblank\},\setemp}\}
    \punc.\]
  Then we define a map $\stmint \funin \tmterms{\tm} \to \tmconf{\tm}$ by:
  \begin{align*}
    \tmint{\funap{\astate}{s,t}} \defdby \tmmap{s}^{-1}\astate\tmmap{t} \in \tmconf{\tm}
    \punc.
  \end{align*}
\end{definition}

\begin{lemma}\label{lem:tmtrs}
  Let $\tm$ be a Turing machine. Then $\tmtrs{\tm}$ simulates $\tm$, that is:
  \begin{enumerate}
    \item $\myall{c \in \tmconf{\tm}}{\funap{\stmint^{-1}}{c} \ne \setemp}$,
    \item \label{tmtrs:ii}
          for all terms $s \in \tmterms{\tm}$: $s \to_{\tmtrs{\tm}} t$ 
          implies $t \in \tmterms{\tm}$ and $\tmint{s} \tmstep \tmint{t}$, and
    \item for all terms $s \in \tmterms{\tm}$: 
          whenever $\tmint{s} \tmstep c$ then $\myex{t \in \funap{\stmint^{-1}}{c}}{s \to_{\tmtrs{\tm}} t}$.
  \end{enumerate}
\end{lemma}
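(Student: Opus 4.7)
The plan is to verify the three clauses in turn, with parts (ii) and (iii) both proceeding by the same 6-way case split on the applied rewrite rule / transition.

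For part (i), given $c = \pair{\astate}{\stmtape} \in \tmconf{\tm}$, I would pick $n_1, n_2 \in \nat$ large enough so that $\tmtape{m} = \stmblank$ outside the window $-n_1 \le m < n_2$; this is possible because the carrier of $\stmtape$ is finite. I then build $s_1$ by nesting $\tmtape{-1}, \tmtape{-2}, \ldots, \tmtape{-n_1}$ on top of $\tmiblank$ (innermost first), and build $s_2$ analogously from $\tmtape{0}, \ldots, \tmtape{n_2-1}$. Then $\funap{\astate}{s_1, s_2} \in \tmterms{\tm}$ and the definitions of $\tmmap{\cdot}$ and $\tmint{\cdot}$ give $\tmint{\funap{\astate}{s_1, s_2}} = c$.

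For part (ii), the crucial observation is that the subterms $s_1, s_2$ of any $s = \funap{\astate}{s_1, s_2} \in \tmterms{\tm}$ contain no state symbols, whereas every left-hand side in $\tmtrs{\tm}$ has a state symbol at its root. Hence any step $s \to_{\tmtrs{\tm}} t$ must occur at the root, so the applied rule is determined by the surface shape of $(s_1, s_2)$. A case analysis across the six rule schemes (two main rules plus four tape-extension rules) shows that $t$ is always of the form $\funap{\astate'}{t_1, t_2}$ with $t_1, t_2$ state-free, so $t \in \tmterms{\tm}$; and in each case the defining equations $\tmmap{\tmiblank} = \lstemp$ and $\tmmap{\funap{f}{u}} = \lstcat{f}{\tmmap{u}}$ together with the matching clause of $\tmtrans{\astate}{\cdot}$ yield $\tmint{s} \tmstep \tmint{t}$ by direct computation.

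For part (iii), given $\tmint{s} \tmstep c$, the symbol read by the head is $f$ when $s_2 \equiv \funap{f}{y}$ and $\stmblank$ when $s_2 \equiv \tmiblank$; together with $\stmtrans$ this determines the move direction $\tmL$ or $\tmR$ and the output symbol $f'$. I would case-split on (a) whether $s_2$ is $\funap{f}{y}$ or $\tmiblank$, (b) the move direction, and in the $\tmL$-cases (c) whether $s_1$ is $\funap{g}{x}$ or $\tmiblank$. Each of the resulting six cases matches exactly one of the six rules of $\tmtrs{\tm}$, and applying that rule produces a reduct $t$ with $\tmint{t} = c$, verified by the same computations used in (ii).

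The entire argument is essentially bookkeeping and presents no real obstacle. The only mildly subtle point is that in the three tape-extension cases (rules 3, 5, 6) the reduct replaces a $\tmiblank$ in $s$ by $\funap{\stmblank}{\cdots}$, exposing one of the implicit infinite blanks as an explicit $\stmblank$-symbol on the opposite side of the head. This matches exactly what the definition of $\stmint$ forces and is precisely the reason the extension rules are needed, but it requires no extra care beyond recognising which of the six cases one is in.
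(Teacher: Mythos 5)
Your proof is correct; note that the paper itself states this lemma without proof (it is taken as routine, the translation being borrowed from the cited handbook chapter), so there is nothing to diverge from, and your argument is exactly the standard one the authors implicitly rely on: the preimage construction for (i), the observation that state symbols occur only at the root of terms in $\tmterms{\tm}$ so every redex is a root redex for (ii), and the six-way case split on the read symbol, the move direction, and the shape of the left subterm, matching the six rule schemes, for (ii) and (iii). The only detail worth double-checking in your write-up is the orientation of the nesting in part (i): since $\tmmap{\funap{f}{t}} = \lstcat{f}{\tmmap{t}}$ puts the root symbol first in the string, the cell adjacent to the head ($\tmtape{-1}$ for $s_1$, $\tmtape{0}$ for $s_2$) must end up as the \emph{outermost} symbol of the subterm, not the innermost; your phrase ``innermost first'' reads as the opposite, though the fix is trivial.
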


The following is an easy corollary.

\begin{corollary}\label{cor:tmtrssn}
  For all $s \in \tmterms{\tm}$: $\SNrs{\tmtrs{\tm}}{s} \Longleftrightarrow \tm$ halts on $\tmint{s}$.
\end{corollary}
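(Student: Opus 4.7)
The plan is to derive both directions of the bi-implication directly from the simulation Lemma~\ref{lem:tmtrs}, using parts (ii) and (iii) in turn.

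For the direction $\SNrs{\tmtrs{\tm}}{s} \Longrightarrow \tm$ halts on $\tmint{s}$, I would argue by contraposition. Suppose $\tm$ does not halt on $\tmint{s}$, so there is an infinite Turing machine computation
\[ \tmint{s} \tmstep c_1 \tmstep c_2 \tmstep c_3 \tmstep \cdots \punc. \]
I then build an infinite rewrite sequence from $s$ by induction on $n$, using part~(iii) of Lemma~\ref{lem:tmtrs}: given $s_n \in \tmterms{\tm}$ with $\tmint{s_n} = c_n$ (taking $s_0 = s$), the step $c_n \tmstep c_{n+1}$ yields some $s_{n+1} \in \funap{\stmint^{-1}}{c_{n+1}}$ with $s_n \to_{\tmtrs{\tm}} s_{n+1}$. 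The resulting infinite rewrite sequence $s \to_{\tmtrs{\tm}} s_1 \to_{\tmtrs{\tm}} s_2 \to_{\tmtrs{\tm}} \cdots$ contradicts $\SNrs{\tmtrs{\tm}}{s}$.

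For the converse, suppose $\tm$ halts on $\tmint{s}$ but $\neg\SNrs{\tmtrs{\tm}}{s}$, so there exists an infinite reduction $s \to_{\tmtrs{\tm}} s_1 \to_{\tmtrs{\tm}} s_2 \to_{\tmtrs{\tm}} \cdots$ starting from $s$. By iterated application of part~(ii) of Lemma~\ref{lem:tmtrs}, every $s_n$ lies in $\tmterms{\tm}$ and
\[ \tmint{s} \tmstep \tmint{s_1} \tmstep \tmint{s_2} \tmstep \cdots \]
is an infinite $\tmstep$-sequence. Since $\tm$ is deterministic, the unique computation from $\tmint{s}$ is precisely this sequence, contradicting the assumption that it reaches the $\tmstep$-normal form $\tmfinal{\tmint{s}}{\tm}$ in finitely many steps.

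Neither direction poses any real obstacle: the lemma has been set up precisely so that (ii) gives the forward simulation needed for the $(\Leftarrow)$ direction and (iii) gives the backward lifting needed for the $(\Rightarrow)$ direction. The only mildly subtle point is noting that $s$ itself lies in $\tmterms{\tm}$ (which is implicit in the hypothesis $s \in \tmterms{\tm}$) so that part~(ii) is applicable at the base case and then recursively at every step along an infinite reduction.
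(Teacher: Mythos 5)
Your proof is correct and follows essentially the same route as the paper, which simply says ``Induction on item (ii) of Lemma~\ref{lem:tmtrs}'': both arguments iterate the simulation lemma along an infinite sequence to transfer non-termination between the TRS and the machine. Your version is in fact more complete than the paper's one-liner, since you correctly note that part~(iii) is also needed to lift an infinite machine computation back to an infinite rewrite sequence for the $(\Rightarrow)$ direction.
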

\begin{proof}
  Induction on item~\ref{tmtrs:ii} of Lemma~\ref{lem:tmtrs}.
\end{proof}

Let us elaborate a bit on Turing machines and the encoding of term rewriting.
\begin{remark}\label{rem:turingrew}
  As discussed in Remark~\ref{rem:encoding},
  terms and term rewriting systems can be encoded as natural numbers.
  Finite rewrite sequnences $\sigma \funin t_1 \to \ldots \to t_n$ can be encoded as lists of terms.
  Then of course a Turing machine can compute 
  the length of $\lstlength{\sigma} \defdby n$ of the sequence,
  every term $t_1$,\ldots,$t_n$,
  in particular the first $\first{\sigma} \defdby t_1$ and the last term $\last{\sigma} \defdby t_n$.
  Given the TRS as input,
  a Turing machine can check whether a natural number $n$
  corresponds to a valid rewrite sequence, that is, check $t_i \to t_{i+1}$ for every $i = 1,\ldots,(n-1)$.
  Furthermore for a given term $t$ and $n \in \nat$ it can calculate
  the set of all reductions of length $\le n$ admitted by $t$
  and thereby check properties like `all reductions starting from $t$ have length $\le n$'
  or `t is a normal form'.
\end{remark}

We arrive at our first results.
\begin{theorem}\label{thm:snsingle}
  The properties $\SN$ and $\WN$ for single terms are $\csig{0}{1}$-complete.
\end{theorem}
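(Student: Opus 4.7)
The plan is to prove membership in $\csig{0}{1}$ and $\csig{0}{1}$-hardness separately, using the translation $\tm \mapsto \tmtrs{\tm}$ from Definition~\ref{def:tmtrs} for hardness.

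For membership, $\WN$ on a single term is immediate: $\WNrs{\atrs}{t}$ holds iff $\myex{\sigma}{\sigma \text{ is a valid finite reduction from } t \text{ whose last term is a normal form}}$, and by Remark~\ref{rem:turingrew} the bracketed condition is decidable when $\sigma$ is encoded as a (G\"odel number of a) list of terms. For $\SN$ I would use that a finite TRS is finitely branching: for any term $t$, the set of reducts of $t$ in one step is finite and computable, so the reduction tree from $t$ has finite branching. By K\"onig's lemma, $\SNrs{\atrs}{t}$ is equivalent to $\myex{n \in \nat}{\text{every reduction from } t \text{ has length at most } n}$. The inner predicate is decidable because one can enumerate the entire finite reduction tree of depth $n$ and check that every leaf is a normal form (again by Remark~\ref{rem:turingrew}); thus $\SN$ on a single term is in $\csig{0}{1}$.

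For hardness, I would reduce the special halting problem (which is $\csig{0}{1}$-complete by Lemma~\ref{lem:complete}(i)) to both properties via a single computable map. Given a Turing machine $\tm$, associate the pair $(\tmtrs{\tm}, t_\tm)$ where $t_\tm \defdby \bfunap{\tmstart}{\tmiblank}{\tmiblank}$ is the term encoding the initial configuration on the blank tape, i.e.\ $\tmint{t_\tm} = \pair{\tmstart}{\stmblank^\omega}$. By Corollary~\ref{cor:tmtrssn}, $\SNrs{\tmtrs{\tm}}{t_\tm}$ iff $\tm$ halts on the blank tape. Since $\tm$ is deterministic, the rule schemas of Definition~\ref{def:tmtrs} are mutually exclusive at every term of $\tmterms{\tm}$, so every term reachable from $t_\tm$ has at most one reduct; hence the reduction from $t_\tm$ is a single (possibly infinite) chain, and $\SN$ and $\WN$ coincide on $t_\tm$. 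Thus
\[
  \tm \text{ halts on the blank tape} \;\Longleftrightarrow\; \SNrs{\tmtrs{\tm}}{t_\tm} \;\Longleftrightarrow\; \WNrs{\tmtrs{\tm}}{t_\tm},
\]
which gives a many-one reduction of the special halting problem to both $\SN$ and $\WN$ on single terms. Combined with Lemma~\ref{lem:reduce}, this yields $\csig{0}{1}$-hardness, and with the membership above we conclude $\csig{0}{1}$-completeness.

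The only delicate step is invoking K\"onig's lemma to move $\SN$ into $\csig{0}{1}$; one must justify that finite branching is genuinely effective here (the set of redex positions of a term is computable from the finite TRS), but this follows directly from the finite presentation of $\atrs$ and Remark~\ref{rem:turingrew}, so no further argument is required.
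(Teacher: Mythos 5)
Your proposal is correct and follows essentially the same route as the paper: the same $\csig{0}{1}$ upper bounds (K\"onig's lemma plus finite branching for $\SN$, direct witness of a normalizing reduction for $\WN$), and the same hardness reduction sending $\tm$ to the pair $(\tmtrs{\tm},\,\bfunap{\tmstart}{\tmiblank}{\tmiblank})$ and invoking Corollary~\ref{cor:tmtrssn}.

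The one point where you diverge is the justification that the single reduction witnesses hardness for \emph{both} properties. The paper observes that $\tmtrs{\tm}$ is orthogonal and non-erasing, so $\SN$ and $\WN$ coincide on \emph{all} terms by a standard theorem; you instead argue that reduction out of $\bfunap{\tmstart}{\tmiblank}{\tmiblank}$ is deterministic (every reachable term lies in $\tmterms{\tm}$, has its only possible redex at the root, and matches at most one rule since $\stmtrans$ is a partial function and the left-hand sides for the $\tmL$/$\tmR$ and tape-extension cases are mutually exclusive), so the reduction graph from $t_\tm$ is a single chain on which $\SN$ and $\WN$ trivially agree. Your argument is more elementary and self-contained (no appeal to the orthogonality theorem), at the cost of being local to the specific term; the paper's version has the advantage that the same fact is reused verbatim for the uniform case in the next theorem, where determinism on a single term would not suffice. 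Both are sound.
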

\begin{proof}
  For $\csig{0}{1}$-hardness we reduce the special halting problem to a termination problem for single terms.
  Therefore let $\tm$ be an arbitrary Turing machine.
  Then
  $\SNrs{\tmtrs{\tm}}{\bfunap{\tmstart}{\tmiblank}{\tmiblank}}$ if and only if
  $\tm$ halts on the blank tape
  by Corollary~\ref{cor:tmtrssn}.
  Moreover observe that $\tmtrs{\tm}$ is orthogonal and non-erasing, thus
  the $\SN$ and $\WN$ coincide \cite{terese:03}.
  Hence both properties $\SN$ and $\WN$ for single terms are $\csig{0}{1}$-hard by Lemma~\ref{lem:reduce}.

  To show that $\SN$ is in $\csig{0}{1}$, let $\atrs$ be a TRS and $t \in \ter{\asig,\avars}$ a term.
  Since $\atrs$ is finite, 
  $t$ is terminating if and only if
  there exists a bound on the length of the reductions admitted by $t$,
  that is, the following formula holds:
  \begin{gather*}
    \SNrs{\atrs}{t} \Longleftrightarrow \myex{n \in \nat}{\text{all reductions starting from $t$ have length $\le n$}}
  \end{gather*}
  Thus we have one existential number quantifier and
  by Remark~\ref{rem:turingrew} the predicate behind the quantifier is recursive.
  Hence $\SN$ for single terms is $\csig{0}{1}$-complete.

  To show that $\WN$ is in $\csig{0}{1}$, let $\atrs$ be a TRS and $t \in \ter{\asig,\avars}$ a term.
  The term $t$ is $\WN$ if there exists a reduction to a normal form:
  \begin{align*}
    \WNrs{\atrs}{t} \Longleftrightarrow \myex{r \in \nat}{
      &(\text{$r$ is a reduction}) \\ &\text{and } t \equiv \first{r} \text{ and } (\last{r} \text{ is a normal form})
    }
  \end{align*}
  This is a $\csig{0}{1}$-formula, hence $\WN$ for single terms is $\csig{0}{1}$-complete. \qed
\end{proof}

For showing $\cpi{0}{2}$-completeness of the uniform properties $\SN$
and $\WN$ we would like to use the equivalence
``$\SN(\tmtrs{\tm}) \Longleftrightarrow \tm$ halts on all inputs'',
in combinatio with Lemma~\ref{lem:complete} (ii). However, this does not work because of the following two problems:
\begin{enumerate}[(1)]
 \item \label{problem:one}
  In
  $\tmtrs{\tm}$ we have terms of the form $\astate(w,v)$, where
  $\astate$ is not the start state and $wv$ is some arbitrary (finite)
  tape content. That $\tm$ halts on all inputs, does not guarantee that
  $\tm$ halts when started in configuration
  $\pair{\astate}{wv}$.
  \item \label{problem:two}
  In
  $\tmtrs{\tm}$ we have terms of the form
  $\astate(\astate(w,v),u)$ that do not correspond to a configuration at
  all.
\end{enumerate}
To deal with problem~\eqref{problem:one}, we can use {\em type
introduction\/} \cite{Zantema:1994,terese:03}, since $\tmtrs{\tm}$
is non-collapsing. We asssign sort $s_0 \to s_0$ to every
$f \in \tmsig$, sort $s_0$ to $\tmiblank$ and sort $s_0\times
s_0 \to s_1$ to every $\astate \in \tmstates$.  The terms of sort
$s_0$ are normal forms.  The (non-variable) terms of sort $s_1$ are
in $\tmterms{\tm}$ after replacing all variables by $\tmiblank$, and
by Corollary~\ref{cor:tmtrssn} for all terms $t \in \tmterms{\tm}$
we have $\SNrs{\tmtrs{\tm}}{t}$ if and only if $\tm$ halts on
$\tmint{t}$.  Hence $\SNr{\tmtrs{\tm}}$ holds if and only if $\tm$
halts on all configurations $\tmconf{\tm}$.

We now need to deal with problem (1); we would like that $\tm$ halts
on all configurations $\tmconf{\tm}$ if and only if $\tm$ halts on all
inputs, starting from the initial state, but that's just not true. We need a Lemma about Turing machines; we use the following result by \cite{Herman:1971}.

\begin{lemma}[\cite{Herman:1971}]\label{lem:tmall}
For every Turing machine $\tm$ that computes a function $f:\nat\to\nat$ we can 
effectively construct a Turing machine $\widehat{\tm}$ such that
\begin{enumerate}
\item
$\widehat{\tm}$ also computes $f$,
\item
$\tm$ hals on all configurations if and only if $f$ is total
\end{enumerate}
\end{lemma}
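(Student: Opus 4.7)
The plan is to follow the standard Herman-style construction: design $\widehat{\tm}$ so that from \emph{any} configuration it deterministically preprocesses the current tape into a canonical input for $\tm$ and then simulates $\tm$ on that input. This reduces ``$\widehat{\tm}$ halts on all configurations'' to ``$\tm$ halts on all canonical inputs'', i.e.\ to totality of $f$.

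Concretely, first I would give $\widehat{\tm}$ an enlarged state set consisting of (i) a preprocessing module $Q_{pre}$ with a fresh start state $\widehat{\tmstart}$, and (ii) a fresh copy $\widetilde{Q_{\tm}}$ of the states of $\tm$ used only for simulation, together with a fresh marker symbol $\$$ added to the tape alphabet. Preprocessing works from every $q \in Q_{pre}$ and every tape symbol by executing a fixed straight-line, always-halting routine that locates the finite non-blank portion of the tape, extracts a natural number $n$ from it in some effective way (for example by counting non-blank cells), erases the tape, writes the canonical encoding $\stmsucc^n\stmzer$ of $n$, plants the marker $\$$ adjacent to it, and hands control to the simulation start state in $\widetilde{Q_{\tm}}$. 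Simulation transitions from states in $\widetilde{Q_{\tm}}$ mirror those of $\tm$, but each is \emph{guarded} by the presence of the marker $\$$ in a specific position relative to the head; if the marker is absent, the transition instead sends control back into $Q_{pre}$.

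I would then verify the two properties. For (1), when $\widehat{\tm}$ is started on the canonical input for $n$, preprocessing recovers exactly $n$, rewrites the same canonical tape, and the subsequent simulation mimics $\tm$ step by step; hence $\widehat{\tm}$ computes $f$. For (2), the direction $(\Leftarrow)$ assumes $f$ total: preprocessing always terminates, so every configuration of $\widehat{\tm}$ reaches a simulation state on a canonical input $n$, on which $\tm$ halts by totality. The direction $(\Rightarrow)$ is immediate, since initial configurations with canonical inputs are in particular configurations, so halting on all of them gives that $f$ is defined everywhere.

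The main obstacle is to ensure that \emph{every} state–tape pair $(q,\tmtape)$ of $\widehat{\tm}$ -- including ``rogue'' ones where $q \in \widetilde{Q_{\tm}}$ but the tape bears no correctly placed marker, which are unreachable from the initial configuration -- is nevertheless driven back into preprocessing and thus halts when $f$ is total. This is exactly the role of the marker guard on the simulation transitions: without the marker in the right place, $\widehat{\tm}$ has no choice but to fall into $Q_{pre}$, whose own transition function is by design total and halting on every tape.
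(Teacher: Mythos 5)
The paper does not actually prove this lemma: it is imported verbatim from Herman's 1971 paper as a black box, so there is no in-paper argument to compare yours against. Judged on its own, your construction has the right general shape (normalize the tape to a canonical input, then simulate $\tm$), and direction $(\Rightarrow)$ of (2) and property (1) are fine modulo routine bookkeeping. But the crucial step --- the one the lemma exists for --- does not go through as you describe it.

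The gap is in your treatment of the ``rogue'' configurations, which you correctly identify as the main obstacle but then dispose of with a marker guard that can be spoofed. A configuration of $\widehat{\tm}$ is an \emph{arbitrary} pair of a state and a finite tape over the alphabet of $\widehat{\tm}$, and that alphabet contains your marker $\$$. So for any simulation state $\widetilde{q}$ and any tape $w$ on which $\tm$, started in state $q$, diverges, there is a configuration of $\widehat{\tm}$ with control in $\widetilde{q}$ and the marker sitting exactly where your guard expects it. The guard passes, $\widehat{\tm}$ faithfully simulates $\tm$ from $(q,w)$ (maintaining the marker as it goes), and hence diverges --- even though $f$ is total, since totality of $f$ only constrains $\tm$ on canonical inputs from the start state. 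This is precisely problem (1) from the paper reappearing one level up, so your $(\Leftarrow)$ direction fails. The standard repair (and the essence of Herman's construction) is stronger than a syntactic guard: $\widehat{\tm}$ must keep the original input $n$ (and, say, a stage counter $k$) explicitly on the tape and repeatedly re-simulate $\tm$ on $\stmsucc^{n}\stmzer$ \emph{from scratch} for $k$ steps, halting as soon as that bounded simulation halts and halting immediately on any malformed tape. Then every configuration either halts at once or determines some $n$ whose from-scratch simulation it will eventually complete, so halting on all configurations reduces exactly to definedness of $f$ on all $n$. Your proof needs this re-verification idea; the marker alone cannot rule out divergent behaviour inherited from non-canonical $\tm$-configurations.
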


So, if $\tm$ halts on all inputs (when started in the initial state),
then $\widehat{\tm}$ halts on all configurations. This solves problem
(1) and we have the following Corollary, which follows from the fact
that the general halting problem (set (ii) in Lemma \ref{lem:complete}) many-one reduces to the universal halting problem (the set in the
Corollary), using Lemma \ref{lem:reduce}. Basically, this corollary has already been stated and proved
in \cite{Herman:1971}.

\begin{corollary}\label{cor:tmall}
  The \emph{uniform halting problem} 
  $$\{\,\tm \where \tm \text{ halts on all configurations $\pair{\astate}{\stmtape} \in \tmconf{\tm}$}\,\}$$
  is $\cpi{0}{2}$-complete.
\end{corollary}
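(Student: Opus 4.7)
The plan is to show that $U := \{\tm \where \tm \text{ halts on every } \pair{\astate}{\stmtape} \in \tmconf{\tm}\}$ both lies in $\cpi{0}{2}$ and is $\cpi{0}{2}$-hard.

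For the upper bound, I would observe that a configuration $c = \pair{\astate}{\stmtape}$ is finite data (a state together with the finite non-blank portion of the tape) and hence encodable as a natural number by Remark~\ref{rem:encoding}. Moreover, given codes for $\tm$, $c$, and $n \in \nat$, bounded simulation decides the predicate ``$\tm$ reaches a $\tmstep$-normal form from $c$ in at most $n$ steps''. Consequently
\begin{align*}
\tm \in U \;\Longleftrightarrow\; \myallin{c}{\tmconf{\tm}}{\myex{n \in \nat}{\tm \text{ halts from } c \text{ within } n \text{ steps}}}
\end{align*}
is a $\forall\exists\REC$ formula, placing $U$ in $\cpi{0}{2}$.

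For hardness, I would reduce the totality problem (Lemma~\ref{lem:complete}(iii), which is $\cpi{0}{2}$-complete) to $U$. Given a Turing machine $\tm$ computing a partial function $f_{\tm}$, Lemma~\ref{lem:tmall} effectively yields $\widehat{\tm}$ computing the same $f_{\tm}$ and with the property that $\widehat{\tm}$ halts on every configuration if and only if $f_{\tm}$ is total. Thus
\begin{align*}
\tm \in \{\text{totality}\} \;\Longleftrightarrow\; f_{\tm} \text{ total} \;\Longleftrightarrow\; \widehat{\tm} \in U,
\end{align*}
so $\tm \mapsto \widehat{\tm}$ is a computable many-one reduction. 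By Lemma~\ref{lem:reduce} this gives $\cpi{0}{2}$-hardness of $U$, and combined with the upper bound we obtain $\cpi{0}{2}$-completeness.

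There is no serious obstacle: Herman's construction (Lemma~\ref{lem:tmall}) does the heavy lifting for hardness, and the membership bound is a routine encoding computation. The only minor care needed is to note that the bounded-simulation predicate must handle arbitrary (non-initial) configurations, but this is immediate from the fact that $\tmstep$ is defined on all of $\tmconf{\tm}$.
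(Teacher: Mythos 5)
Your proposal is correct and follows essentially the same route as the paper: the hardness direction invokes Herman's construction (Lemma~\ref{lem:tmall}) to reduce a known $\cpi{0}{2}$-complete halting problem to the uniform halting problem via $\tm \mapsto \widehat{\tm}$, and the membership direction is the routine $\forall\exists\REC$ formula (which the paper leaves implicit). The only cosmetic difference is that you reduce from the totality problem (Lemma~\ref{lem:complete}(iii)) where the paper's text cites the general halting problem (ii); both are $\cpi{0}{2}$-complete and interchangeable here.
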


\begin{theorem}\label{tm:snuniform}
  The properties uniform $\SN$ and $\WN$ are $\cpi{0}{2}$-complete.
\end{theorem}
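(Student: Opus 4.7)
The plan is to establish $\cpi{0}{2}$-membership and $\cpi{0}{2}$-hardness separately, combining the $\tmtrs{\tm}$-construction already introduced with the uniform halting result from Corollary~\ref{cor:tmall}.

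For membership, I first observe that since $\atrs$ is finite, the one-step rewrite relation on any term has only finitely many successors; hence the reduction graph starting from any term is finitely branching. By K\"onig's lemma, $\SNrs{\atrs}{t}$ holds if and only if there is a bound on the lengths of all reductions from $t$, so
\[
  \SNr{\atrs} \iff \myallin{t}{\ter{\asig,\avars}}{\myexin{n}{\nat}{\text{every reduction from $t$ has length $\le n$}}}.
\]
By Remark~\ref{rem:turingrew} the predicate behind the existential quantifier is decidable (a Turing machine can enumerate all reductions of length $\le n$ from $t$ and check whether any of them can be extended), so uniform $\SN$ lies in $\cpi{0}{2}$. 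For $\WN$, a reduction to a normal form can be coded as a natural number, so
\[
  \WNr{\atrs} \iff \myallin{t}{\ter{\asig,\avars}}{\myexin{r}{\nat}{\text{$r$ codes a reduction from $t$ to a normal form}}},
\]
is a $\cpi{0}{2}$-formula in the same way.

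For hardness, the reduction $\tm \mapsto \tmtrs{\tm}$ is computable, and by the type-introduction argument directly preceding the theorem together with Corollary~\ref{cor:tmtrssn}, we have $\SNr{\tmtrs{\tm}}$ if and only if $\tm$ halts on every configuration in $\tmconf{\tm}$. The uniform halting problem is $\cpi{0}{2}$-complete by Corollary~\ref{cor:tmall}, so by Lemma~\ref{lem:reduce} uniform $\SN$ is $\cpi{0}{2}$-hard. The TRS $\tmtrs{\tm}$ is orthogonal and non-erasing (at most one rule applies per state/symbol combination, and no right-hand side drops a variable), so $\SN$ and $\WN$ coincide on $\tmtrs{\tm}$ by the standard result for such systems \cite{terese:03}. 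Hence the same reduction establishes $\cpi{0}{2}$-hardness for uniform $\WN$.

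The main obstacle is not in the TRS side of the argument (which is a routine application of type introduction), but rather in justifying that halting on \emph{every} configuration, as opposed to halting on all inputs started from the initial state, is $\cpi{0}{2}$-complete; this is exactly what Lemma~\ref{lem:tmall} and Corollary~\ref{cor:tmall} supply by Herman's trick of replacing $\tm$ with $\widehat{\tm}$, so the bulk of the heavy lifting for the theorem is already discharged by the corollaries stated above.
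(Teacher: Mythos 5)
Your proof is correct and follows essentially the same route as the paper: membership via the single-term $\csig{0}{1}$ characterizations (finite branching/K\"onig for $\SN$, an explicit reduction code for $\WN$) prefixed by a universal quantifier over terms, and hardness via the computable map $\tm \mapsto \tmtrs{\tm}$, the type-introduction argument, Corollary~\ref{cor:tmall}, and the orthogonal/non-erasing observation to transfer the result from $\SN$ to $\WN$. No substantive differences from the paper's own proof.
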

\begin{proof}
  For $\cpi{0}{2}$-hardness: we have seen how the universal halting problem for $\tm$ many-one reduces to the uniform termination problem for $\tmtrs{\tm}$.
Since $\tmtrs{\tm}$ is orthogonal and non-erasing
  $\SN$ and $\WN$ coincide \cite{terese:03}.
  Hence $\SN$ and $\WN$ are both $\cpi{0}{2}$-hard by Lemma~\ref{lem:reduce}.
  That the uniform properties $\SN$ and $\WN$ are in $\cpi{0}{2}$ 
  follows from the fact that these properties for single terms 
  can be described by $\csig{0}{1}$-formulas
  and the uniform property `adds' a universal number quantifier.
\end{proof}

\section{Confluence and Ground Confluence} \label{sec:confluence}

We investigate the complexity of confluence ($\CR$) and ground confluence ($\gCR$)
both uniform and for single terms.

For proving $\cpi{0}{2}$-completeness of confluence
one would like to use an extension of $\tmtrs{\tm}$
with the following rules:
\begin{align*}
    \bfunap{\msf{run}}{x}{y} &\to \tmT\\
    \bfunap{\msf{run}}{x}{y} &\to \bfunap{\tmstart}{x}{y}\\
    \bfunap{\astate}{x}{\funap{f}{y}} &\to \tmT
    &&\text{ for every $f \in \tmsig$ with }\tmtrans{\astate}{f}\text{ is undefined}
\end{align*}
On first glance it seems that $\bfunap{\tmstart}{s}{t} \to^* \tmT$
if the Turing machine $\tm$ halts on all configurations.
However, a problem arises if $s$ and $t$ contain variables;
e.g. if $s$ or $t$ are variables themselves.
We solve the problem as follows.
For Turing machines $\tm$
we define the TRS $\btrs_\tm$ to consist of the rules of the TRS $\tmtrs{\tm}$ extended by:
\begin{align}
  \bfunap{\msf{run}}{x}{\tmiblank} &\to \tmT \label{rule:t}\\
  \bfunap{\msf{run}}{\tmiblank}{y} &\to \bfunap{\tmstart}{\tmiblank}{y} \label{rule:tm}\\
  \bfunap{\astate}{x}{\funap{f}{y}} &\to \tmT
  &&\text{ for every $f \in \tmsig$ with }\tmtrans{\astate}{f}\text{ is undefined} \label{rule:tmterm}\\
  \bfunap{\msf{run}}{x}{\tmsucc{y}} &\to \bfunap{\msf{run}}{\tmsucc{x}}{y} \label{rule:convr}\\
  \bfunap{\msf{run}}{\tmsucc{x}}{y} &\to \bfunap{\msf{run}}{x}{\tmsucc{y}} \label{rule:convl}
  \punc.
\end{align}
Then $\tmT$ and $\bfunap{\tmstart}{\tmiblank}{s}$ are convertible using the rules~\eqref{rule:t}--\eqref{rule:convl}
if and only if $s$ is a ground term of the form $\funap{\stmsucc^n}{\tmiblank}$.

\begin{theorem}\label{thm:cr}
  Uniform confluence ($\CR$), and uniform ground confluence ($\gCR$)
  are $\cpi{0}{2}$-complete.
\end{theorem}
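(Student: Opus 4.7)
The plan is to establish the two bounds separately. For the $\cpi{0}{2}$ upper bound, I would express uniform $\CR$ in prenex form as
\[
\forall t, t_1, t_2, \sigma_1, \sigma_2\; \exists d, \rho_1, \rho_2\; \bigl[(R(\sigma_1, t, t_1) \wedge R(\sigma_2, t, t_2)) \to (R(\rho_1, t_1, d) \wedge R(\rho_2, t_2, d))\bigr],
\]
where $R(\sigma, s, t)$ abbreviates the decidable predicate ``$\sigma$ encodes a valid rewrite sequence from $s$ to $t$'' (cf.\ Remark~\ref{rem:turingrew}). This is a $\cpi{0}{2}$ formula over a decidable matrix. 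Uniform $\gCR$ is obtained by adding the decidable conjunct ``$t$ is ground''.

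For $\cpi{0}{2}$-hardness of both problems, I would reduce from the uniform halting problem (Corollary~\ref{cor:tmall}). Given a Turing machine $\tm$, first apply Lemma~\ref{lem:tmall} to obtain $\widehat\tm$ for which halting on all configurations coincides with halting on every input $\funap{\stmsucc^n}{\tmiblank}$ ($n \in \nat$), and then form the TRS $\btrs_{\widehat\tm}$ defined just before the theorem. For any $n$ on which $\widehat\tm$ diverges, the witnessing ground peak is
\[
\tmT \;\mredi\; \bfunap{\msf{run}}{\tmiblank}{\funap{\stmsucc^n}{\tmiblank}} \;\mred\; \bfunap{\tmstart}{\tmiblank}{\funap{\stmsucc^n}{\tmiblank}},
\]
where the left leg applies rule~\eqref{rule:convr} $n$ times followed by rule~\eqref{rule:t}, and the right leg applies rule~\eqref{rule:tm}. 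Since $\tmT$ is a normal form and rule~\eqref{rule:tmterm} (the only other introducer of $\tmT$) fires precisely at halting TM configurations, by Corollary~\ref{cor:tmtrssn} the right-hand term cannot reach $\tmT$, so the peak is not joinable. As the peak is ground, this simultaneously yields hardness for both $\CR$ and $\gCR$.

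For the converse direction (halting implies confluence), I would show that every peak in $\btrs_{\widehat\tm}$ joins. The root critical pairs among the added rules~\eqref{rule:t}--\eqref{rule:convl} and~\eqref{rule:tmterm} are all joinable, with the non-trivial case (pair~\eqref{rule:t}/\eqref{rule:tm} at $\bfunap{\msf{run}}{\tmiblank}{\tmiblank}$) using halting on the empty tape, and the remaining pairs joining by shuffling back via~\eqref{rule:convr} and~\eqref{rule:convl}. Multi-step peaks from $\bfunap{\msf{run}}{\funap{\stmsucc^k}{\tmiblank}}{\funap{\stmsucc^l}{\tmiblank}}$ descend, on the one hand via shuffling and~\eqref{rule:t} to $\tmT$, and on the other hand via shuffling and~\eqref{rule:tm} to $\bfunap{\tmstart}{\tmiblank}{\funap{\stmsucc^{k+l}}{\tmiblank}}$ which then reaches $\tmT$ by the halting assumption on input $k{+}l$. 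Non-$\msf{run}$ terms inherit confluence from the orthogonality of $\tmtrs{\widehat\tm}$, and rule~\eqref{rule:tmterm} does not overlap with $\tmtrs{\widehat\tm}$ since it fires exclusively at halting states. The main obstacle is that $\btrs_{\widehat\tm}$ is not terminating (the shuffling rules loop on non-ground terms and the TM simulation may diverge on non-standard configurations), so Newman's lemma does not apply and confluence must be proved by a direct case analysis that tracks the finite set of reachable shuffled-and-simulated variants of each term.
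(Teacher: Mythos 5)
Your construction, your witnessing ground peak $\tmT \mredi \bfunap{\msf{run}}{\tmiblank}{\funap{\stmsucc^n}{\tmiblank}} \red \bfunap{\tmstart}{\tmiblank}{\funap{\stmsucc^n}{\tmiblank}}$, and your $\forall\forall\exists$ upper-bound formula all coincide with the paper's. However, two steps do not go through as written.

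First, the source of the reduction. You claim to reduce from the uniform halting problem via $\tm\mapsto\btrs_{\widehat\tm}$, but this is not a many-one reduction of that problem: a machine $\tm$ that computes a total function yet diverges on some configuration unreachable from any initial one lies \emph{outside} the uniform halting set, while $f_\tm$ is total, so $\widehat\tm$ halts on every input $\stmsucc^n$ and $\btrs_{\widehat\tm}$ is confluent. Lemma~\ref{lem:tmall} ties halting on all configurations of $\widehat{\tm}$ to totality of $f_\tm$, not to the uniform halting behaviour of the original $\tm$. The paper sidesteps this by reducing from the \emph{totality} problem (Lemma~\ref{lem:complete}(iii)): rule~\eqref{rule:tm} only ever launches the machine in state $\tmstart$ on a tape $\stmsucc^n$, and a diverging simulation is deterministic and hence creates no peak, so $\btrs_\tm$ is confluent iff $\tm$ halts on $\tmstart\stmsucc^n$ for every $n$. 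No $\widehat{\tm}$ is needed; your argument is repaired simply by renaming the source problem.

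Second, the direction ``halting implies confluence''. You rightly observe that Newman's lemma is unavailable, but joinability of the critical pairs together with an appeal to orthogonality of $\tmtrs{\tm}$ does not yield confluence of the combined non-terminating, non-orthogonal system, and ``a direct case analysis tracking reachable variants'' leaves open precisely the hard part: terms in which $\msf{run}$- or state-rooted redexes occur nested beneath other symbols (e.g.\ $\bfunap{\msf{run}}{\bfunap{\msf{run}}{\tmiblank}{\tmiblank}}{\tmiblank}$), where root and inner steps interact. The paper's missing ingredient is type introduction \cite{Aoto97}: since $\btrs_\tm$ is non-collapsing, confluence is persistent under sorting, and with sorts $s_0$ for $\tmsig\cup\{\tmiblank\}$ and $s_1$ for $\{\msf{run},\tmT\}\cup\tmstates$ every argument of a sort-$s_1$ symbol is a normal form. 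Hence \emph{all} redexes sit at the root, reduction is deterministic except at $\msf{run}$-rooted terms, and the peak you analyse is the only one there is. Without that step your case analysis is incomplete and would require a substantially more laborious commutation argument.
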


\begin{proof}
  For proving $\cpi{0}{2}$-hardness we reduce the totality problem to confluence.
  Let $\tm$ be an arbitrary Turing machine.
  We consider the TRS $\btrs_\tm$ defined above.
  We employ type introduction~\cite{Aoto97}:
  we assign
  sort $s_0$ to $\tmsig \cup \{\tmiblank\}$
  and sort $s_1$ to every symbol in $\{\msf{run},\,\tmT\} \cup \tmstates$;
  the obtained many-sorted TRS is confluend if and only if $\btrs$ is.
  Note that the terms of sort $s_0$ are normal forms
  and for terms of $s_1$ with root symbol $\neq$ `$\msf{run}$' the reduction is deterministic
  (exhibits no branching).
  Therefore it suffices to consider the case
  $$s_2 \redi_{\eqref{rule:tm}} s_1 \redi_{\eqref{rule:convr}}^* 
   \funap{\msf{run}}{t_1,t_2} \red_{\eqref{rule:convl}}^* s_3 \red_{\eqref{rule:t}} \tmT$$
  where $t_1,t_2 \in \ter{\tmsig \cup \{\tmiblank\},\avars}$.
  From the existence of such rewrite sequences we conclude that
  there exists $n \in \nat$ such that $s_1 \equiv \bfunap{\msf{run}}{\tmiblank}{\funap{\stmsucc^n}{\tmiblank}}$,
  $s_3 \equiv \bfunap{\msf{run}}{\funap{\stmsucc^n}{\tmiblank}}{\tmiblank}$, and
  $s_2 \equiv \bfunap{\tmstart}{\tmiblank}{\funap{\stmsucc^n}{\tmiblank}}$.
  On the other hand for every $n \in \nat$ such rewrite sequences exist.
  As a consequence the TRS $\btrs$ is confluent if and only if 
  $\bfunap{\tmstart}{\tmiblank}{\funap{\stmsucc^n}{\tmiblank}} \red_{\btrs}^* \tmT$
  for every $n \in \nat$, that is,
  if and only if $\tm$ halts on $\tmstart \stmsucc^n$ for every $n \in \nat$.
  Moreover since the only critical terms $\funap{\msf{run}}{t_1,t_2}$ are ground terms,
  we conclude that ground confluence coincides with confluence for $\btrs$.
  Hence we have shown $\cpi{0}{2}$-hardness.

  To show that both properties are in $\csig{0}{2}$ let $\atrs$ be a TRS.
  Then $\atrs$ is confluent if and only if the following formula holds:
  \begin{align*}
    \CRr{\atrs} \Longleftrightarrow\ &\myall{t \in \nat}{\myall{r_1,r_2 \in \nat}{\myex{r_1',r_2' \in \nat}{\\
      (&((\text{$t$ is a term}) \text{ and  } (\text{$r_1$, $r_2$ are reductions})
       \text{ and } t \equiv \first{r_1} \equiv \first{r_2})\\
      &
       \begin{aligned}
       \implies\ &((\text{$r_1'$ and $r_2'$ are reductions})\\
       &\text{and } (\last{r_1} \equiv \first{r_1'}) \text{ and } (\last{r_2} \equiv \first{r_2'})\\
       &\text{and } (\last{r_1'} \equiv \last{r_2'})))
       \end{aligned}
    }}}\punc.
  \end{align*}
  By quantifier compression we can simplify the formula such that
  there is only universal followed by an existencial quantifier.
  Note that a formula for $\gCR$ is obtained
  by relacing `$t$ is a term' by `$t$ is a ground term'.
  Therefore both the uniform properties $\CR$ and $\gCR$ are $\cpi{0}{2}$-complete.
\end{proof}

\begin{theorem}\label{thm:crsingle}
  Confluence ($\CR$), and ground confluence ($\gCR$)
  for single terms are $\cpi{0}{2}$-complete.
\end{theorem}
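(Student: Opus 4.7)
The plan is to prove $\cpi{0}{2}$-completeness by establishing membership and hardness separately.

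For $\cpi{0}{2}$-\emph{membership}, the same $\forall\exists$-formula used in the proof of Theorem~\ref{thm:cr} to witness $\CRr{\atrs} \in \cpi{0}{2}$ also witnesses $\CRrs{\atrs}{t} \in \cpi{0}{2}$ after removing the outermost universal quantifier over $t$ and treating $t$ as a fixed parameter; the restriction to ground $t$ gives $\gCRrs{\atrs}{t} \in \cpi{0}{2}$.

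For $\cpi{0}{2}$-\emph{hardness}, I reduce the totality problem to confluence on a single ground term. Given a Turing machine $\tm$, first apply Lemma~\ref{lem:tmall} to obtain $\widehat{\tm}$, which computes the same function and halts on \emph{every} configuration iff $\tm$ is total. Extend $\btrs_{\widehat{\tm}}$ from the proof of Theorem~\ref{thm:cr} by a fresh constant $\msf{gen}$ and the rules
\begin{align*}
  \msf{gen} \to \tmiblank \punc, \qquad \msf{gen} \to \funap{\stmsucc}{\msf{gen}} \punc,
\end{align*}
yielding a TRS $\btrs'_{\widehat{\tm}}$. Let $t_\tm \defdby \bfunap{\msf{run}}{\tmiblank}{\msf{gen}}$, a ground term; I claim $\CRrs{\btrs'_{\widehat{\tm}}}{t_\tm}$ holds iff $\tm$ is total. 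The contrapositive of one direction: if $\widehat{\tm}$ fails to halt on $\tmstart \stmsucc^n$ for some $n$, then $t_\tm \mred \tmT$ (via $\msf{gen} \to \tmiblank$ and rule~\eqref{rule:t}) and also $t_\tm \mred \bfunap{\tmstart}{\tmiblank}{\funap{\stmsucc^n}{\tmiblank}}$ (via rule~\eqref{rule:tm} and iterating $\msf{gen} \to \funap{\stmsucc}{\msf{gen}}$), but the second reduct admits no reduction to $\tmT$.

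For the other direction, note that no rule ever moves $\msf{gen}$ away from the bottom of the right argument of the root, so every reduct $s$ of $t_\tm$ contains at most one occurrence of $\msf{gen}$, situated at the bottom of the right argument. Reducing this $\msf{gen}$ (if present) to $\tmiblank$ yields a ground $\msf{gen}$-free term $s'$. If the root of $s'$ is $\msf{run}$, shifts via rules~\eqref{rule:convl}/\eqref{rule:convr} followed by rule~\eqref{rule:t} reach $\tmT$; if the root is a state $\astate \in \tmstates$, then $s'$ encodes a configuration of $\widehat{\tm}$, on which $\widehat{\tm}$ halts by Lemma~\ref{lem:tmall}, so the simulation reaches a halt state and rule~\eqref{rule:tmterm} (or its $\tmiblank$-variant) delivers $\tmT$. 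Hence $\tmT$ is a common reduct of any pair of reducts of $t_\tm$, so $t_\tm$ is confluent; since $t_\tm$ is ground the same construction shows $\cpi{0}{2}$-hardness of $\gCR$ on a single term.

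The main obstacle is the backward direction: TRS reductions may freely interleave $\msf{gen}$-expansions with Turing machine rewrite steps, so a reduct of $t_\tm$ may carry out partial TM computations before $\msf{gen}$ is resolved, thereby corresponding to a configuration not reachable from any canonical initial configuration $\tmstart \stmsucc^n$. The passage from $\tm$ to $\widehat{\tm}$ via Lemma~\ref{lem:tmall} is essential, as it guarantees halting on \emph{every} configuration rather than merely the canonical initial ones.
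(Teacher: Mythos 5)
Your proof is correct, and the membership half coincides with the paper's (drop the universal quantifier over $t$ from the formula in Theorem~\ref{thm:cr}). For hardness you take a genuinely different route that is worth contrasting with the paper's. The paper also reduces the totality problem, but it keeps the nondeterministic choice of input \emph{outside} the simulation: it extends $\tmtrs{\tm}$ with a unary $\msf{run}$ and the rules $\funap{\msf{run}}{x} \to \tmT$, $\funap{\msf{run}}{x} \to \funap{\msf{run}}{\tmsucc{x}}$, $\funap{\msf{run}}{x} \to \bfunap{\tmstart}{\tmiblank}{x}$, and takes $t = \funap{\msf{run}}{\tmiblank}$. Because the input $\funap{\stmsucc^n}{\tmiblank}$ is completely built before the start state is ever created, every state-rooted reduct of $t$ encodes a configuration reachable from a canonical initial configuration $\tmstart\stmsucc^n$, so the reduction works for $\tm$ itself and Lemma~\ref{lem:tmall} is not needed. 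Your construction instead plants the generator $\msf{gen}$ inside the tape, where it can be expanded concurrently with the simulation; as you correctly diagnose, this yields reducts encoding configurations not reachable from any canonical input, which is precisely why you must pass to $\widehat{\tm}$ and use halting on \emph{all} configurations. Both reductions are sound; yours is heavier (binary $\msf{run}$, the shift rules \eqref{rule:convr}--\eqref{rule:convl}, and Herman's lemma), while the paper's gadget sidesteps the non-canonical-configuration problem entirely by sequencing generation strictly before simulation. One minor remark: both your argument and the paper's tacitly assume that every halting configuration rewrites to $\tmT$ via the rule for undefined transitions, which silently ignores the case of a term $\bfunap{\astate}{u}{\tmiblank}$ with $\tmtrans{\astate}{\stmblank}$ undefined (a normal form distinct from $\tmT$); this informality is inherited from the paper, not introduced by you.
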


\begin{proof}
  For $\cpi{0}{2}$-hardness we use the totality problem.
  Let $\tm$ be an arbitrary Turing machine.
  We define the TRS $\btrs$ as $\tmtrs{\tm}$ extended by the following rules:
  \begin{gather*}
    \begin{aligned}
    \funap{\msf{run}}{x} &\to \tmT &\quad
    \funap{\msf{run}}{x} &\to \funap{\msf{run}}{\tmsucc{x}} &\quad
    \funap{\msf{run}}{x} &\to \bfunap{\tmstart}{\tmiblank}{x}
    \end{aligned}\\
    \bfunap{\astate}{x}{\funap{f}{y}} \to \tmT
    \quad\text{ for every $f \in \tmsig$ with }\tmtrans{\astate}{f}\text{ is undefined} \label{rule:tmterm}
    \punc.
  \end{gather*}
  The term $t \defdby \funap{\msf{run}}{\tmiblank}$
  rewrites to $\tmT$ and $\bfunap{\tmstart}{\tmiblank}{\funap{\stmsucc^n}{x}}$ for every $n \in \nat$.
  Furthermore we have $\bfunap{\tmstart}{\tmiblank}{\funap{\stmsucc^n}{\tmiblank}} \red_{\btrs}^* \tmT$
  if and only if $\tm$ halts on $\tmstart \stmsucc^n$.
  As a consequence $\CR$ and $\gCR$ for single terms are $\cpi{0}{2}$-hard.

  For $\cpi{0}{2}$-completeness note that we can formalize $\CR$ and $\gCR$
  for single terms simply by dropping the universal quantification over all terms ($\forall t \in \nat$)
  from the respecitive $\cpi{0}{2}$-formulas for the uniform properties in the proof of Theorem~\ref{thm:cr}.
\end{proof}

\section{Weak Confluence and Weak Ground Confluence}

We investigate the complexity of weak confluence ($\WCR$) and weak ground confluence ($\gWCR$)
both uniform and for single terms.

\begin{theorem}\label{thm:wcr}
  The properties weak confluence ($\CR$) both for single terms and uniform,
  and weak ground confluence ($\gCR$) for single terms are $\csig{0}{1}$-complete.
\end{theorem}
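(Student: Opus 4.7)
My plan is to establish $\csig{0}{1}$-membership by exploiting the finiteness of the family of peaks that must be joined, and to establish $\csig{0}{1}$-hardness by a single-critical-peak construction that reduces the special halting problem.

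For membership, fix a finite TRS $\atrs$ and a term $t$. Since $\atrs$ is finite, the set of one-step reducts of $t$ is finite, so $\WCRrs{\atrs}{t}$ amounts to a finite conjunction of statements of the form ``$t_1$ and $t_2$ have a common reduct''. Each such joinability is $\csig{0}{1}$ by Remark~\ref{rem:turingrew}: guess a code of a pair of finite reductions and verify. A finite conjunction of $\csig{0}{1}$ predicates is again $\csig{0}{1}$, so $\WCR$ and $\gWCR$ for single terms (using a ground $t$ in the latter case) lie in $\csig{0}{1}$. For uniform $\WCR$ I invoke the critical pair lemma: a TRS is weakly confluent iff each of its finitely many critical pairs is joinable. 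The critical pairs are effectively computable from $\atrs$, so we again reduce to a finite conjunction of $\csig{0}{1}$ joinability statements, placing uniform $\WCR$ in $\csig{0}{1}$.

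For $\csig{0}{1}$-hardness I reduce the special halting problem. Given $\tm$, let $\btrs_\tm$ extend $\tmtrs{\tm}$ by fresh constants $a$ and $\tmT$, the halting rules $\bfunap{\astate}{x}{\funap{f}{y}} \to \tmT$ for every $(\astate,f)$ with $\tmtrans{\astate}{f}$ undefined, and the two rules
\[
  a \to \bfunap{\tmstart}{\tmiblank}{\tmiblank}, \qquad a \to \tmT.
\]
Since $\tmtrs{\tm}$ is orthogonal and the halting rules apply exactly to those $(\astate,f)$ for which no $\tmtrs{\tm}$ rule applies, the only non-trivial critical peak of $\btrs_\tm$ is $\bfunap{\tmstart}{\tmiblank}{\tmiblank} \redi a \red \tmT$. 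As $\tmT$ is a normal form, joinability of this peak is equivalent to $\bfunap{\tmstart}{\tmiblank}{\tmiblank} \mred \tmT$, which by Corollary~\ref{cor:tmtrssn} combined with the halting rules holds precisely when $\tm$ halts on the blank tape. Hence uniform $\WCRr{\btrs_\tm}$, $\WCRrs{\btrs_\tm}{a}$, and weak confluence of $\btrs_\tm$ on the ground term $a$ all coincide with ``$\tm$ halts on the blank tape''; $\csig{0}{1}$-hardness of all three properties then follows from Lemma~\ref{lem:reduce}.

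The main obstacle I expect is the careful enumeration of critical pairs of $\btrs_\tm$ to confirm that no overlap outside the root of $a$ produces a non-trivial peak; a minor technicality is handling halting configurations of $\tm$ that scan a blank, which is resolved either by adding an analogous rule $\bfunap{\astate}{x}{\tmiblank} \to \tmT$ or by assuming without loss of generality that $\tm$ only halts while scanning a non-blank symbol.
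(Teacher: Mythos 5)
Your proposal is correct and follows essentially the same route as the paper: membership via the finitely many one-step reducts (single term) and the Critical Pairs Lemma (uniform), each reduced to finitely many $\csig{0}{1}$ joinability statements, and hardness via a two-rule fork from a fresh constant (your $a$ is the paper's $\msf{run}$) whose unique critical peak is joinable iff $\tm$ halts on the blank tape. Your remark about halting configurations that scan a blank represented by $\tmiblank$ is a small refinement the paper glosses over, but it does not change the argument.
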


\begin{proof}
  For $\csig{0}{1}$-hardness we use the special halting problem.
  Let $\tm$ be an arbitrary Turing machine.
  We define the TRS $\btrs$ to consist of the rules of $\tmtrs{\tm}$ extended by the following rules:
  \begin{gather*}
    \begin{aligned}
      \msf{run} &\to \tmT &\quad
      \msf{run} &\to \bfunap{\tmstart}{\tmiblank}{\tmiblank}
    \end{aligned}\\
    \bfunap{\astate}{x}{\funap{f}{y}} \to \tmT
    \quad\text{ for every $f \in \tmsig$ with }\tmtrans{\astate}{f}\text{ is undefined} \label{rule:tmterm}
    \punc.
  \end{gather*}
  The only critical pair is
  $\tmT \redi \msf{run} \to \bfunap{\tmstart}{\tmiblank}{\tmiblank}$,
  and we have $\bfunap{\tmstart}{\tmiblank}{\tmiblank} \red_{\btrs}^* \tmT$,
  if and only if $\tm$ halts on the blank tape.
  By the Critical Pairs Lemma~\cite{terese:03}
  we know that $\WCR$ holds if and only if all critical pairs are convergent (can be joined).
  Hence uniform $\WCR$, and for single terms $\WCR$ and $\gWCR$ ($t \defdby \msf{run}$) are $\csig{0}{1}$-hard.

  A Turing machine can compute 
  on the input of a TRS $\atrs$ all (finitely many) critical pairs,
  and on the input of a TRS $\atrs$ and a term $t$ all (finitely many) one step reducts of $t$.
  Therefore it suffices to show that the following problem is in $\csig{0}{1}$:
  decide on the input of a TRS $\btrs$, $n \in \nat$ and terms $t_1,s_1,\ldots,t_n,s_n$
  whether for every $i = 1,\ldots,n$ the terms $t_i$ and $s_i$ have a common reduct.
  This property can be described by the following $\csig{0}{1}$ formula:
  \begin{align*}
    \myex{r \in \nat}{
      (&(\text{$r$ is list $r_1,\ldots,r_{2\cdot n}$ of legnth $2 \cdot n$})\\
       &\text{and for $i = 1,\ldots,n$ we have}\\
       &\quad
        \begin{aligned}
          &(\text{$r_{2\cdot i}$, $r_{2\cdot i + 1}$ are reductions}) \text{ and } (\first{r_{2\cdot i}} \equiv t_i)\\
          &\text{and } (\first{r_{2\cdot i + 1}} \equiv s_i)
           \text{ and } (\last{r_{2\cdot i}} \equiv \last{r_{2\cdot i+1}})
          \punc.
        \end{aligned}
    }
  \end{align*}
\end{proof}

Surprisingly it turns out that uniform weak ground confluence is $\csig{0}{2}$-complete
and thereby harder than uniform weak confluence (for the set of all open terms).

\begin{theorem}\label{thm:wgcr}
  Uniform weak ground confluence ($\gCR$) is $\cpi{0}{2}$-complete.
\end{theorem}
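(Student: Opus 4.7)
The plan is to establish $\cpi{0}{2}$-hardness by reducing the general halting problem and to verify membership in $\cpi{0}{2}$ by an explicit $\forall\exists$ formulation. Membership is straightforward: a TRS $\atrs$ satisfies $\gWCR$ exactly when for every ground term $s$ and every pair $t_1,t_2$ with $s\to t_1$ and $s\to t_2$ there exist reductions $r_1,r_2$ with $\first(r_i)=t_i$ and $\last(r_1)=\last(r_2)$. All predicates inside the quantifiers are decidable by Remark~\ref{rem:turingrew}, and compressing the inner pair of existentials yields a $\cpi{0}{2}$-formula.

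For hardness, given a Turing machine $\tm$, I would construct a TRS $\btrs_\tm$ consisting of the rules of $\tmtrs{\tm}$, halting rules $\bfunap{\astate}{x}{\funap{f}{y}} \to \tmT$ (and $\bfunap{\astate}{x}{\tmiblank} \to \tmT$) for every pair with $\tmtrans{\astate}{f}$ undefined, and just two new ``choice'' rules at a fresh unary symbol $\msf{g}$:
\[
  \msf{g}(x) \to \tmT \qquad\qquad \msf{g}(x) \to \bfunap{\tmstart}{\tmiblank}{x}\punc.
\]
I would then employ type introduction, just as in the proof of Theorem~\ref{thm:cr}, assigning sort $s_0$ to the symbols in $\tmsig\cup\{\tmiblank\}$ and sort $s_1$ to those in $\{\msf{g},\tmT\}\cup\tmstates$, so that ground terms of sort $s_0$ are exactly tape contents.

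The crucial observation is that the only critical pair of $\btrs_\tm$ is $(\tmT,\bfunap{\tmstart}{\tmiblank}{x})$ obtained from the two $\msf{g}$-rules: the rules of $\tmtrs{\tm}$ are mutually non-overlapping, the halting rules are defined precisely for configurations where no $\tmtrs{\tm}$-rule applies, and $\msf{g}$ does not occur in any left-hand side of the other rules. By the ground-term version of the Critical Pairs Lemma, $\btrs_\tm$ is $\gWCR$ iff every ground instance of this pair joins, i.e.\ iff $\bfunap{\tmstart}{\tmiblank}{t} \to^*_{\btrs_\tm} \tmT$ for every ground term $t$ of sort $s_0$. By Corollary~\ref{cor:tmtrssn} together with the halting rules, this holds exactly when $\tm$ halts on every tape, which (via Lemma~\ref{lem:tmall}/Corollary~\ref{cor:tmall}) is the $\cpi{0}{2}$-complete general halting problem. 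Note also that the associated open critical pair $\bfunap{\tmstart}{\tmiblank}{x}$ vs.\ $\tmT$ cannot join, since $x$ blocks every TM-rule, which explains why the open-term variant $\WCR$ is strictly easier ($\csig{0}{1}$) than $\gWCR$.

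The main obstacles I anticipate are two bookkeeping issues. First, I must carefully verify that no unintended critical pair arises among the $\tmtrs{\tm}$-rules, the halting rules and the $\msf{g}$-rules, by a case analysis on the left-hand-side shapes. Second, I need type introduction to transfer $\gWCR$ between the sorted and unsorted systems; this is analogous to the use of~\cite{Aoto97} in Theorem~\ref{thm:cr}, and I would argue it in the same way by observing that ground peaks involving ill-sorted terms reduce, via the $\msf{g}$-rules, only to themselves, so ill-sorted instances do not contribute new non-joinable peaks beyond those captured by the sorted analysis.
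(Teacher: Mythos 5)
There is a genuine gap in the hardness direction: your reduction as stated makes $\gWCR$ \emph{unconditionally} false, independently of whether $\tm$ halts. The critical pair $(\tmT,\bfunap{\tmstart}{\tmiblank}{x})$ must join for \emph{every} ground instance of $x$, and the ground terms of the unsorted signature include ill-sorted ones such as $\tmT$, $\funap{\msf{g}}{\tmiblank}$, or $\funap{\stmsucc}{\tmT}$. For $x:=\tmT$ the peak $\tmT \redi \funap{\msf{g}}{\tmT} \red \bfunap{\tmstart}{\tmiblank}{\tmT}$ does not join: your only rules with a state at the root require the second argument to have root in $\tmsig\cup\{\tmiblank\}$, so $\bfunap{\tmstart}{\tmiblank}{\tmT}$ is a normal form distinct from $\tmT$. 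Your closing appeal to type introduction cannot repair this: persistence results relate the well-sorted fragment to the set of \emph{all} (open) terms, but ground weak confluence is a statement about all ground terms of the unsorted signature, ill-sorted ones included --- indeed, the fact that these ground instances cannot be quantified away is precisely why $\gWCR$ ends up at $\cpi{0}{2}$ while $\WCR$ is only $\csig{0}{1}$. Your claim that ``ground peaks involving ill-sorted terms reduce only to themselves'' is false for exactly the peak above.

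The paper avoids this by adding, besides the two choice rules (there $\bfunap{\msf{run}}{x}{y}\to\tmT$ and $\bfunap{\msf{run}}{x}{y}\to\bfunap{\tmstart}{x}{y}$, with \emph{both} tape halves free), collapse rules $\bfunap{\astate}{\funap{f}{\vec{x}}}{\funap{g}{\vec{y}}}\to\tmT$ for \emph{every} combination of root symbols $f,g$ of the full signature not matched by a rule of $\tmtrs{\tm}$ --- not only for undefined transitions with $f,g\in\tmsig\cup\{\tmiblank\}$. Thus any stuck or ill-sorted term with a state at the root still reaches $\tmT$, and the only way a ground state-term can fail to reach $\tmT$ is by sustaining an infinite root simulation; a truncation argument (replace every maximal subterm whose root is outside $\tmsig\cup\{\tmiblank\}$ by $\tmiblank$) then turns such a term into a genuine non-halting configuration. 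This is also why the paper reduces from the \emph{uniform} halting problem (all configurations, Corollary~\ref{cor:tmall}) rather than from halting on all inputs from the initial state: the converse direction must control arbitrary ground terms $\bfunap{\astate}{s}{t}$ for arbitrary states $\astate$, not just those reachable from $\tmstart$. Your membership argument via a compressed $\forall\exists$ formula is fine and matches the paper.
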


\begin{proof}
  For $\cpi{0}{2}$-hardness we use the uniform halting problem.
  Let $\tm$ be a Turing machine.
  We define the TRS $\btrs$ as extension of $\tmtrs{\tm}$ with:
  \newcommand{\isNr}{\funap{\msf{isNr}}}
  \newcommand{\ok}{\funap{\msf{ok}}}
  \begin{align*}
    \bfunap{\msf{run}}{x}{y} &\to \tmT\\
    \bfunap{\msf{run}}{x}{y} &\to \bfunap{\tmstart}{x}{y}
    \punc,
  \end{align*}
  and rules 
  \begin{align*}
  \bfunap{\astate}{\funap{f}{\vec{x}}}{\funap{g}{\vec{y}}} &\to \tmT
  \end{align*}
  for all combinations of symbols of $f$, $g$
  such that the left-hand side is not matched by any of the rules in $\tmtrs{\tm}$.
  Here $\vec{x}$ and $\vec{y}$ are vectors of distinct variables such that the left-hand side of the rules are left-linear.

  Assume there exists a configuration $c$ on which $\tm$ does not halt.
  Then by Lemma~\ref{lem:tmtrs} there exists $\bfunap{\astate}{s}{t} \in \funap{\stmint^{-1}}{c}$
  and by Corollary~\ref{cor:tmtrssn} $\tmtrs{\tm}$ is not terminating on $\bfunap{\astate}{s}{t}$.
  Every reduct of $\bfunap{\astate}{s}{t}$ is an $\tmtrs{\tm}$-redex
  and contains no further redexes.
  In particular, none of the extended rules is applicable to any reduct.
  Hence $\bfunap{\astate}{s}{t} \not\mred \tmT$ and thus $\tmT \redi \bfunap{\msf{run}}{s}{t} \to \bfunap{\astate}{s}{t}$
  is not joinable.

  Assume that $\tm$ halts on all configurations.
  Let $D = \{\msf{run}\} \cup \tmstates$. 
  Let $V$ be the set of ground terms having a root symbol from $D$.
  All symbols apart from $D$ are constructor symbols.
  Hence for (weak) confluence it suffices to show that every reduct of a term in $V$ rewrites to $\tmT$.
  Every term from $V$ is a redex and all reducts of terms in $V$ are in $V \cup \{\tmT\}$.
  Thus it suffices to show that no term in $V$ admits an infinite root rewrite sequence.
  Such a sequence can only exists if a ground of the form $\bfunap{\astate}{s}{t}$
  admits an infinite $\tmtrs{\tm}$-root rewrite sequence.
  Below the root (which is in $\tmstates$) the rules from $\tmtrs{\tm}$
  match only symbols from $\tmsig \cup \{\tmiblank\}$.
  Let $s'$ (and $t'$) be obtained from $s$ (and $t$, respectively) by replacing
  all subterms having a root symbol not in $\tmsig \cup \{\tmiblank\}$ with $\tmiblank$.
  Then $\bfunap{\astate}{s'}{t'}$ admits an infinite $\tmtrs{\tm}$-rewrite sequence,
  $s', t' \in \ter{\asig \cup \{\tmiblank\},\setemp}$, and $\bfunap{\astate}{s'}{t'} \in \tmterms{\tm}$.
  Consequently $\tmint{\bfunap{\astate}{s'}{t'}}$
  is a non-terminating configuration of $\tm$ by Corollary~\ref{cor:tmtrssn},
  contradicting the assumption that $\tm$ halts on all configurations.
  \qed
\end{proof}

\section{Dependency Pair Problems}
In this section we present the remarkable result that finiteness of dependency pair problems, 
although invented for proving termination, is of a much higher level of complexity than termination
itself: it is $\cpi{1}{1}$-complete, both uniform and for single terms. This only holds for the 
basic version of dependency pairs; for the version with minimality flag we will show it is of 
the same level as termination itself. 

For relations $\to_1, \to_2$ we write $\to_1 / \to_2 \; = \; \to_2^* \cdot \to_1$.
For TRSs $R$, $S$ instead of $\SN(\redrroot{R}/\to_S)$ we shortly write $\SNdp{R}{S}$; in the 
literature \cite{GTS04} this is called {\em finiteness of the dependency pair problem} $\{R,S\}$. 
So $\SNdp{R}{S}$ means that every infinite $\redrroot{R} \cup \to_S$ reduction contains 
only finitely many $\redrroot{R}$ steps.  The 
motivation for studying this comes from the dependency pair approach \cite{AG00} for proving 
termination: for any TRS $R$ we can easily define a TRS $\DP(R)$ such that we have
\[ \SNdp{\DP(R)}{R} \Longleftrightarrow \SN(R). \]

The main result of this section is $\cpi{1}{1}$-completeness of $\SNdp{R}{S}$, 
even of $\SNdp{S}{S}$,
for both the uniform and the single term variant. In the next section we will consider the
variant $\SNdpm{R}{S}$ with minimality flag which only makes sense for the uniform variant, 
and show that it behaves like normal termination: it is $\cpi{0}{2}$-complete.

For proving $\cpi{1}{1}$-hardness of $\SNdp{S}{S}$
 we now adopt Definition~\ref{def:tmtrs}, the translation of Turing machines to TRSs.
The crucial difference is that every step of the Turing machine
`produces' one output pebble `$\stmpeb$',
thereby we achieve that the TRS $\tmtrspeb{\tm}$ is top-terminating
even if the Turing machine $\tm$ does not terminate.

\begin{definition}\normalfont\label{def:tmtrs-adap}
  For every Turing machine $\tm = \quadruple{\tmstates}{\tmsig}{\tmstart}{\stmtrans}$
  we define the TRS $\tmtrspeb{\tm}$ as follows.
  The signature $\asig = \tmstates \cup \tmsig \cup \{\tmiblank,\stmpeb,\tmT\}$ where $\stmpeb$ is a unary symbol,
  $\tmT$ is a constant symbol,
  and the rewrite rules of $\tmtrspeb{\tm}$ are:
  \begin{align*}
    \ell &\to \tmpeb{r}
    &&\text{ for every }\ell \to r \in \tmtrs{\tm}
  \end{align*}
  and rules for rewriting to $\tmT$ after successful termination:
  \begin{align*}
    \bfunap{\astate}{x}{\tmzer{y}} &\to \tmT
    &&\text{ whenever }\tmtrans{\astate}{\stmsucc}\text{ is undefined}\\
    \tmpeb{\tmT} &\to \tmT
    \punc.
  \end{align*}
\end{definition}

Then we obtain the following lemma. (Recall the Definition of $\tmrel{\tm}$ in \ref{def:funcrel}.)
\begin{lemma}
  For every Turing machine $\tm = \quadruple{\tmstates}{\tmsig}{\tmstart}{\stmtrans}$
  and $n,m \in \nat$ we have $n \tmrel{\tm} m$ if and only if
  $\funap{\tmstart}{\stmsucc^n,\stmsucc^m} \mred_{\tmtrspeb{\tm}} \tmT$.\qed
\end{lemma}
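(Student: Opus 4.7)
The plan is to combine Lemma~\ref{lem:tmtrs} (which relates $\tmtrs{\tm}$-rewrites to Turing-machine steps) with an auxiliary simulation lemma relating $\tmtrs{\tm}$-rewrites to $\tmtrspeb{\tm}$-rewrites, and then analyze how the symbol $\tmT$ can be created in a $\tmtrspeb{\tm}$-reduct. I will read $\funap{\tmstart}{\stmsucc^n,\stmsucc^m}$ as the term $\bfunap{\tmstart}{\stmsucc^n(\stmzer(\tmiblank))}{\stmsucc^m(\stmzer(\tmiblank))}\in\tmterms{\tm}$, whose image under $\stmint$ is exactly the configuration $\stmzer\stmsucc^n\tmstart\stmsucc^m\stmzer$ from Definition~\ref{def:funcrel}.

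First I would establish a pebble-tracking simulation: for every $s\in\tmterms{\tm}$, a step $s\to_{\tmtrs{\tm}} t$ at position $\apos$ lifts to a step $s\to_{\tmtrspeb{\tm}} s'$ where $s'$ is obtained from $t$ by wrapping one additional $\stmpeb$ at position $\apos$; conversely every $\tmtrspeb{\tm}$-step that uses one of the original simulation rules has this shape. Iterating this gives, for any $\tmtrs{\tm}$-reduction $s_0\to_{\tmtrs{\tm}}\cdots\to_{\tmtrs{\tm}} s_k$, a $\tmtrspeb{\tm}$-reduction of $s_0$ to a term obtained from $s_k$ by inserting $k$ pebbles at the positions where the successive redexes sat. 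In particular, rewrites of the full configuration at the root produce a $k$-fold pebble tower $\stmpeb^{k}(s_k)$.

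For the ($\Rightarrow$)-direction, assume $n\tmrel{\tm} m$. By Definition~\ref{def:funcrel}, $\tm$ halts on $\stmzer\stmsucc^n\tmstart\stmsucc^m\stmzer$ in some configuration with $\stmzer$ under the head whose control state $\astate$ has no applicable transition, so in particular the left-hand side of the rule $\bfunap{\astate}{x}{\tmzer(y)}\to\tmT$ applies. By Lemma~\ref{lem:tmtrs}(iii) and Corollary~\ref{cor:tmtrssn}, the corresponding $\tmtrs{\tm}$-reduction from $\funap{\tmstart}{\stmsucc^n,\stmsucc^m}$ reaches a matching term $\bfunap{\astate}{s'}{\tmzer(t')}$. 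The simulation then yields $\funap{\tmstart}{\stmsucc^n,\stmsucc^m}\mred_{\tmtrspeb{\tm}} \stmpeb^{k}(\bfunap{\astate}{s'}{\tmzer(t')})$, to which we apply $\bfunap{\astate}{x}{\tmzer(y)}\to\tmT$ under all $k$ pebbles to obtain $\stmpeb^{k}(\tmT)$, and then collapse with $k$ applications of $\stmpeb(\tmT)\to\tmT$.

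For the ($\Leftarrow$)-direction, assume $\funap{\tmstart}{\stmsucc^n,\stmsucc^m}\mred_{\tmtrspeb{\tm}}\tmT$. Since $\tmT$ is a fresh symbol appearing only on the right of the two rules $\bfunap{\astate}{x}{\tmzer(y)}\to\tmT$ and $\stmpeb(\tmT)\to\tmT$, and since the start term contains no $\tmT$, the first step in the reduction that introduces $\tmT$ must use the rule $\bfunap{\astate}{x}{\tmzer(y)}\to\tmT$. By the simulation lemma every intermediate term up to that moment is of the form $\stmpeb^{j}(\bar s)$ with $\bar s\in\tmterms{\tm}$, so the firing redex sits at the innermost state occurrence and corresponds to a configuration of $\tm$ reached from $\stmzer\stmsucc^n\tmstart\stmsucc^m\stmzer$ with $\stmzer$ under the head and $\astate$ having no applicable transition. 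Thus $\tm$ halts in such a configuration, giving $n\tmrel{\tm} m$. The main obstacle is the bookkeeping of positions and pebble layers in the converse direction, in particular verifying that no rewrite below the innermost state can prematurely produce the structure required to fire the $\tmT$-rule, so that the first creation of $\tmT$ is genuinely witnessed by a halting configuration of $\tm$.
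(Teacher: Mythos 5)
Your proof is correct and supplies precisely the argument the paper leaves implicit (the lemma is stated with an immediate \qed): the pebble-tracking simulation built on Lemma~\ref{lem:tmtrs}, together with the observation that $\tmT$ can first be created only by the halting rule fired at the unique state symbol, so reaching $\tmT$ is equivalent to $\tm$ halting with $\stmzer$ under the head. The one caveat is inherited from the paper rather than introduced by you: the side condition of the rule $\bfunap{\astate}{x}{\tmzer{y}} \to \tmT$ should be read as ``$\tmtrans{\astate}{\stmzer}$ is undefined'' rather than the printed ``$\tmtrans{\astate}{\stmsucc}$ is undefined'', which is exactly what your appeal to ``no applicable transition'' tacitly assumes.
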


Moreover we define an auxiliary TRS $\trspickn$ for generating a random natural number $n \in \nat$
in the shape of a term  $\funap{\sfs^n}{\tmzer{\tmiblank}}$:

\begin{definition}\normalfont\label{def:pickn}
  We define the TRS $\trspickn$ to consist of the following three rules:
  \begin{align*}
    \pickn &\to \fc{\pickn} &
    \pickn &\to \picknok{\tmzer{\tmiblank}} &
    \fc{\picknok{x}} &\to \picknok{\fs{x}}
    \punc.
  \end{align*}
\end{definition}
\begin{lemma}\label{lem:pickn}
  The TRS $\trspickn$ has the following properties:
  \begin{itemize}
   \item $\pickn \mred \picknok{\funap{\sfs^n}{\tmzer{\tmiblank}}}$ for every $n \in \nat$, and
   \item whenever $\pickn \mred \picknok{t}$ for some term $t$ then
         $t \equiv \funap{\sfs^n}{\tmzer{\tmiblank}}$ for some $n \in \nat$.
  \end{itemize}
\end{lemma}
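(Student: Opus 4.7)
The plan is to prove the two bullets separately by direct manipulation of the three rewrite rules of $\trspickn$. For the first bullet, I would exhibit an explicit reduction. Starting from $\pickn$, apply the rule $\pickn \to \fc{\pickn}$ at the root $n$ times to obtain $\funap{\sfc^n}{\pickn}$; then apply $\pickn \to \picknok{\tmzer{\tmiblank}}$ at the unique innermost occurrence of $\pickn$, producing $\funap{\sfc^n}{\picknok{\tmzer{\tmiblank}}}$; finally apply $\fc{\picknok{x}} \to \picknok{\fs{x}}$ at the innermost $\fc$-position $n$ times. A straightforward induction on $n$ (peeling off one $\fc$ and inserting one $\fs$ per step) shows the resulting term is $\picknok{\funap{\sfs^n}{\tmzer{\tmiblank}}}$.

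For the second bullet, the key is to establish the following reachability invariant: every $u$ with $\pickn \mred u$ is either of the form $\funap{\sfc^k}{\pickn}$ for some $k \in \nat$, or of the form $\funap{\sfc^k}{\picknok{\funap{\sfs^m}{\tmzer{\tmiblank}}}}$ for some $k,m \in \nat$. This I would prove by induction on the length of the reduction. The base case takes the first shape with $k = 0$. For the inductive step, I would classify the possible redexes in each shape. Since none of $\sfs$, $\stmzer$, $\tmiblank$, and $\picknok$ occurs as a root symbol of any left-hand side of $\trspickn$, the subterm $\funap{\sfs^m}{\tmzer{\tmiblank}}$ contains no redex at all. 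Consequently, in the first shape the only redex is the innermost $\pickn$: rewriting it with $\pickn \to \fc{\pickn}$ yields $\funap{\sfc^{k+1}}{\pickn}$, while using $\pickn \to \picknok{\tmzer{\tmiblank}}$ yields $\funap{\sfc^k}{\picknok{\tmzer{\tmiblank}}}$, i.e.\ the second shape with $m = 0$. In the second shape, when $k \ge 1$ the only redex is the innermost occurrence of $\fc{\picknok{\cdot}}$, whose contractum gives $\funap{\sfc^{k-1}}{\picknok{\funap{\sfs^{m+1}}{\tmzer{\tmiblank}}}}$; when $k = 0$ the term is in normal form. In all cases the invariant is preserved.

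The second bullet then falls out immediately: if $\pickn \mred \picknok{t}$, the invariant forces the second shape with $k = 0$ (since $\picknok$ and $\fc$ are distinct symbols), whence $t \equiv \funap{\sfs^m}{\tmzer{\tmiblank}}$ for some $m \in \nat$. The only nontrivial work is the exhaustive case analysis of redex positions in the inductive step of the invariant; I expect this to be the main obstacle, but it reduces to the finite observation that only $\pickn$ and $\fc$ occur as root symbols of left-hand sides in $\trspickn$, so no new redexes can hide inside the two canonical shapes.
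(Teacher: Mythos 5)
Your proof is correct. The paper in fact gives no proof of this lemma at all (it is stated as an unproved auxiliary fact), and your argument --- an explicit witnessing reduction for the first bullet, plus a reachability invariant classifying all reducts of $\pickn$ into the two shapes $\funap{\sfc^k}{\pickn}$ and $\funap{\sfc^k}{\picknok{\funap{\sfs^m}{\tmzer{\tmiblank}}}}$, proved by induction on reduction length with a redex-position case analysis --- is exactly the routine argument the authors implicitly rely on. The one observation doing all the work, that only $\pickn$ and $\sfc$ head a left-hand side so no redex can occur inside the $\sfs^m(\tmzer{\tmiblank})$ part or at a non-innermost $\sfc$, is stated explicitly in your write-up, so nothing is missing.
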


Now we are ready to prove $\cpi{1}{1}$-completeness
of dependency pair problems.
\begin{theorem}\label{thm:snilocal}
Both  $\SNdps{\atrs}{\btrs}{t}$ and  $\SNdp{\atrs}{\btrs}$ are $\cpi{1}{1}$-complete.
\end{theorem}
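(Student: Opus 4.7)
The plan is to prove membership in $\cpi{1}{1}$ by an explicit formula count and to obtain hardness by reducing the $\cpi{1}{1}$-complete set $\WF$ of Lemma~\ref{lem:complete}(iv) to both problems.

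For the upper bound I would encode an infinite $\redrroot{\atrs}\cup\to_{\btrs}$ reduction by a function variable $\alpha\funin\nat\to\nat$ giving the G\"odel codes of the successive terms together with a marker saying whether each step is a root-$\atrs$ step or an arbitrary $\btrs$-step. The predicate ``every step is valid'' is $\cpi{0}{1}$ in $\alpha$ (a decidable check at each $i$), and ``infinitely many markers indicate root-$\atrs$'' is $\cpi{0}{2}$, so their conjunction is arithmetical. Hence $\lnot\SNdp{\atrs}{\btrs}$ has the shape $\myex{\alpha}{\phi(\alpha)}$ with $\phi$ arithmetical, so it is $\csig{1}{1}$ and $\SNdp{\atrs}{\btrs}$ itself lies in $\cpi{1}{1}$. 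For $\SNdps{\atrs}{\btrs}{t}$ I simply conjoin $\alpha(0)=t$ inside $\phi$.

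For hardness, given a Turing machine $\tm$ I would build a TRS $\btrs_\tm$ by taking $\tmtrspeb{\tm}$, $\trspickn$, and three orchestrating rules over fresh function symbols $F$ (unary) and $G$ (binary):
\begin{align*}
  F(\picknok{x}) &\to G(\picknok{x},\pickn),\\
  G(\picknok{x},\picknok{y}) &\to G(\picknok{y},\funap{\tmstart}{x,y}),\\
  G(\picknok{x},\tmT) &\to F(\picknok{x}).
\end{align*}
Setting $\atrs:=\btrs:=\btrs_\tm$, each cycle of three root steps $F\to G\to G\to F$ first non-deterministically picks a number $m$ via $\pickn\mred\picknok{\sfs^m(\tmzer{\tmiblank})}$ (Lemma~\ref{lem:pickn}), then installs the verification term $\funap{\tmstart}{x,y}$ in the second argument, and can only close once $\funap{\tmstart}{x,y}\mred\tmT$. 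By the simulation property of $\tmtrspeb{\tm}$ this last reduction happens iff the numerical codes satisfy $n\tmrel{\tm}m$. Hence from the starting term $F(\pickn)$ an infinite root reduction exists iff there is an infinite $\tmrel{\tm}$-descending sequence, which already establishes hardness for the single-term version.

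The hard part will be the uniform case: I must show that if $\tmrel{\tm}$ is well-founded then \emph{no} term (not only $F(\pickn)$) admits an infinite root reduction. The idea is a case analysis on the root symbol of any term that admits a root step: it must be one of $F$, $G$, a TM state, $\stmpeb$, $\pickn$, or $\sfc$; of these only $F$ and $G$ can sustain a root reduction of length $\ge 3$, because TM-state and $\stmpeb$-rooted terms reach $\tmT$ (a root normal form) within at most two root steps or else get stuck there, while $\pickn$- and $\sfc$-rooted terms reach a $\picknok{\cdot}$ (also a root normal form) within at most two root steps. Once the reduction has entered the $F$/$G$ cycle, Lemma~\ref{lem:pickn} forces every fresh second argument introduced by the rule $F\to G$ to reduce through $\pickn$ to some $\picknok{\sfs^{m}(\tmzer{\tmiblank})}$ before the next $G\to G$ step can fire, so from the second cycle onwards the $x_i$'s and $y_i$'s appearing in the successive verifications are genuine numerals, and the successful verifications yield a true chain $n_1\tmrel{\tm}n_2\tmrel{\tm}\cdots$, contradicting well-foundedness of $\tmrel{\tm}$. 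Together with Lemma~\ref{lem:reduce} this gives $\cpi{1}{1}$-hardness of both $\SNdp{\atrs}{\btrs}$ and $\SNdps{\atrs}{\btrs}{t}$, completing the proof.
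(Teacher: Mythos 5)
Your proposal is correct and follows essentially the same route as the paper: membership via a single universal function quantifier over an encoded infinite reduction, and hardness by reducing the $\cpi{1}{1}$-complete set $\WF$ through $\tmtrspeb{\tm}$ and $\trspickn$ with a root-level cycle that guesses successive numbers and verifies $n_i \tmrel{\tm} n_{i+1}$, including the same analysis of top-termination and of the finitely many initial ``garbage'' arguments. The only difference is cosmetic: the paper packs your three-rule $F$/$G$ cycle into a single rule on a ternary symbol $\msf{run}$ that carries the verification term and the two most recent numbers simultaneously.
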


\begin{proof}
We prove $\cpi{1}{1}$-hardness even for the case where $\atrs$ and $\btrs$ coincide. We do this by using that the set $\WF$ is 
$\cpi{1}{1}$-complete, that is, checking well-foundedness of $\tmrel{\tm}$.
  Let $\tm$ be an arbitrary Turing machine. 
  From $\tm$ we construct a TRS $\btrs$ together with a term $t$ such that:
  \[\SNdp{\btrs}{\btrs} \Longleftrightarrow \SNdps{\btrs}{\btrs}{t} \Longleftrightarrow {\tmrel{\tm}} \text{ is well-founded}\punc.\]
  Let $\btrs$ consist of the rules of $\tmtrspeb{\tm} \uplus \trspickn$
  together with:
  \begin{align}
    \tfunap{\msf{run}}{\tmT}{\picknok{x}}{\picknok{y}} &\to \tfunap{\msf{run}}{\bfunap{\tmstart}{x}{y}}{\picknok{y}}{\pickn}
    \punc,\label{rule:dp}
  \end{align}
  and define $t \defdby \tfunap{\msf{run}}{\tmT}{\pickn}{\pickn}$.

As the implication from the first to the second item is trivial, we
only have to prove (1) $\SNdps{\btrs}{\btrs}{t} \Longleftrightarrow
{\tmrel{\tm}} \text{ is well-founded}$ and (2) ${\tmrel{\tm}} \text{
is well-founded} \Longleftrightarrow \SNdp{\btrs}{\btrs}$.

(1)  Suppose $\SNdps{\btrs}{\btrs}{t}$ and assume there is an infinite descending $\tmrel{\tm}$-sequence: $n_1 \tmrel{\tm} n_2 \tmrel{\tm} \ldots$. Then we have:
  \begin{align*}
    \tfunap{\msf{run}}{\tmT}{\pickn}{\pickn}
    \mred&\ \tag{$*$}\label{dpseq}
      \tfunap{\msf{run}}{\tmT}{\picknok{\funap{\sfs^{n_1}}{\tmzer{\tmiblank}}}}{\picknok{\funap{\sfs^{n_2}}{\tmzer{\tmiblank}}}}\\
    \redrroot{\btrs}&\  \tfunap{\msf{run}}
             {\bfunap{\tmstart}{\funap{\sfs^{n_1}}{\tmzer{\tmiblank}}}{\funap{\sfs^{n_2}}{\tmzer{\tmiblank}}}}
             {\picknok{\funap{\sfs^{n_2}}{\tmzer{\tmiblank}}}}{\pickn}\\
    \mred&\ \tfunap{\msf{run}}
             {\tmT}
             {\picknok{\funap{\sfs^{n_2}}{\tmzer{\tmiblank}}}}{\picknok{\funap{\sfs^{n_3}}{\tmzer{\tmiblank}}}}\\
    \redrroot{\btrs}&\ \ldots
  \end{align*}
Note that $\bfunap{\tmstart}{\funap{\sfs^{n_i}}{\tmzer{\tmiblank}}}{\funap{\sfs^{n_{i+1}}}{\tmzer{\tmiblank}}} \mred \tmT$
  (for all $i \ge 1$) because
  $\tm$ computes the binary predicate $\tmrel{\tm}$. So we have an infinite reduction starting from $t$, contradicting $\SNdps{\btrs}{\btrs}{t}$. So there is no infinite descending $\tmrel{\tm}$-sequence.

(2) Suppose that ${\tmrel{\tm}} \text{
is well-founded}$ and assume that $\sigma$ is a rewrite sequence containing infinitely many root steps.
  Note that \eqref{rule:dp} is the only candidate for a rule which can be applied infinitely often at the root.
  Hence all terms in $\sigma$ have the root symbol $\msf{run}$.
  We consider the first three applications of \eqref{rule:dp} at the root in $\sigma$.
  After the first application the third argument of $\msf{run}$ is $\pickn$.
  Therefore after the second application the second argument of $\msf{run}$ 
  is a reduct of $\pickn$ and the third is $\pickn$.
  Then before the third application the first argument is $\tmT$,
  and both the second and the third argument are reducts of $\pickn$. Thus $\SNdps{\btrs}{\btrs}{t}$ cannot hold.

It remains to prove that both $\SNdp{\atrs}{\btrs}$ and $\SNdps{\atrs}{\btrs}{t}$ are in $\cpi{1}{1}$. Let $\atrs$ 
and $\btrs$ be TRSs.  Then $\SNdp{\atrs}{\btrs}$ holds if and only if all
  $\redrroot{\atrs} \cup \redr{\btrs}$ reductions contain only a finite number of $\redrroot{\atrs}$ steps.
  An infinite reduction can be encoded as a function $\alpha \funin \nat \to \nat$
  where $\funap{\alpha}{n}$ is the $n$-th term of the sequence.
  We can express the property as follows:
  \begin{align*}
    \SNdp{\atrs}{\btrs} \Longleftrightarrow\ &\myall{\alpha \funin \nat \to \nat}{\\
      (&(\myall{n \in \nat}{\funap{\alpha}{n}\text{ rewrites to }\funap{\alpha}{n+1}\text{ via $\redrroot{\atrs} \cup \redr{\btrs}$}}) \implies\\
      &\myex{m_0 \in \nat}{\myall{m \ge m_0}{
         \neg (\funap{\alpha}{m}\text{ rewrites to }\funap{\alpha}{m+1}\text{ via $\redrroot{\btrs}$})
       }})
    }\punc,
  \end{align*}
  containing one universal function quantifier in front of an arithmetic formula.
  Here the predicate `$n$ rewrites to $m$' tacitly includes a check that
  both $n$ and $m$ indeed encode terms (which estabishes no problem for a Turing machine).
  For the property $\SNdps{\atrs}{\btrs}{t}$ we simply add the condition $t = \funap{f}{1}$ 
  to restrict the quantification to such rewrite sequences $f$ that start with $t$.
  Hence $\SNdp{\atrs}{\btrs}$ and $\SNdps{\atrs}{\btrs}{t}$ are $\cpi{1}{1}$-complete.
\qed
\end{proof}

We now sketch how this proof also implies  $\cpi{1}{1}$-completeness of the property $\SNi$ in infinitary rewriting, for its 
definition and basic observations see \cite{KV05}.
Since in Theorem \ref{thm:snilocal} we proved $\cpi{1}{1}$-hardness even for the case where $\atrs$ and $\btrs$ coincide,
we conclude that $\SNdp{\btrs}{\btrs}$ is $\cpi{1}{1}$-complete. This property $\SNdp{\btrs}{\btrs}$ states that every 
infinite $\btrs$-reduction contains only finitely many root steps. This is the same as the property $\SNw$ when restricting
to finite terms; for the definition of $\SNw$ see \cite{Z08} (basically, it states that in any infinite reduction the position of the contracted redex moves to infinity). However, when extending to infinite terms it still holds that 
for the TRS $S$ in the proof of Theorem \ref{thm:snilocal} the only infinite $S$-reduction containing infinitely many 
root steps is of the shape given in that proof, only consisting of finite terms. So $\SNw$ for all terms (finite and infinite)
is $\cpi{1}{1}$-complete. It is well-known that for left-linear TRSs the properties $\SNw$ and $\SNi$ coincide, see e.g. 
\cite{Z08}. Since the TRS $S$ used in the proof of Theorem \ref{thm:snilocal} is left-linear we conclude that the property
$\SNi$ for left-linear TRSs is $\cpi{1}{1}$-complete.

\section{Dependency Pair Problems with Minimality Flag}
A variant in the dependency pair approach is the dependency pair problem with minimality flag. 
Here in the infinite 
$\redrroot{R} \cup \to_S$ reductions all terms are assumed to be $S$-terminating. This can
be defined as follows. On the level of relations $\to_1, \to_2$ we write 
\[\to_1 /_{\msf{min}} \to_2 \; = \; (\to_2^* \cdot \to_1) \cap \to_{\SN(\to_2)},\]
where the relation $\to_{\SN(\to_2)}$ is defined to consist of all 
pairs $(x,y)$ for which $x$ is $\to_2$-terminating. 
For TRSs $R,S$ instead of $\SN(\redrroot{R}/_{\msf{min}}\to_S)$ we shortly write 
$\SNdpm{R}{S}$. In \cite{GTS04} this is called finiteness of the dependency pair problem
$(R,Q,S)$ with minimality flag; in our setting the middle TRS $Q$ is empty.
Again the motivation for this definition is in proving termination: from \cite{AG00} we 
know 
\[ \SNdpm{\DP(R)}{R} \Longleftrightarrow \SN(R). \]
For $\SNdpm{R}{S}$ it is not clear how to define a single term variant, in particular for
terms that are not $S$-terminating.
In this section we prove that $\SNdpm{R}{S}$ is $\cpi{0}{2}$-complete.
For doing so first we give some lemmas.

\begin{lemma}
\label{lemdp1}
Let $R,S$ be TRSs. Then $\SNdpm{R}{S}$ holds if and only if 
\[(\redrroot{R} \cup \to_S) \cap \to_{\SN(\to_S)}\]
is terminating.
\end{lemma}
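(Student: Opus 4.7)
By unfolding definitions, $\SNdpm{R}{S}$ means $\SN((\to_S^* \cdot \redrroot{R}) \cap \to_{\SN(\to_S)})$, so the claim reduces to the equivalence
\[
\SN((\to_S^* \cdot \redrroot{R}) \cap \to_{\SN(\to_S)}) \iff \SN((\redrroot{R} \cup \to_S) \cap \to_{\SN(\to_S)}).
\]
I would prove both directions by contraposition, converting an infinite reduction in one relation into an infinite reduction in the other. The standing observation used throughout is that $\to_S$ preserves $S$-termination: any $\to_S$-reduct of a $\to_S$-terminating term is itself $\to_S$-terminating. Recall also that $(x,y) \in {\to_{\SN(\to_S)}}$ restricts only the source $x$ to be $S$-terminating.

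For the ($\Leftarrow$) direction, suppose we have an infinite reduction $x_0 \to x_1 \to \cdots$ in $(\to_S^* \cdot \redrroot{R}) \cap \to_{\SN(\to_S)}$; every $x_i$ is $S$-terminating as the source of such a step. Each transition $x_i \to x_{i+1}$ refines to $x_i = y_0 \to_S y_1 \to_S \cdots \to_S y_{k_i} \redrroot{R} y_{k_i+1} = x_{i+1}$, and by the preservation observation every intermediate $y_j$ is $S$-terminating. Concatenating these refinements yields an infinite reduction in $(\redrroot{R} \cup \to_S) \cap \to_{\SN(\to_S)}$, contradicting the assumption.

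For the ($\Rightarrow$) direction, suppose we have an infinite reduction $x_0 \to x_1 \to \cdots$ in $(\redrroot{R} \cup \to_S) \cap \to_{\SN(\to_S)}$; again every $x_i$ is $S$-terminating. No tail of the sequence can consist of $\to_S$-steps only, because that would be an infinite $\to_S$-reduction starting from an $S$-terminating term; hence infinitely many steps are $\redrroot{R}$-steps, say at positions $j_1 < j_2 < \cdots$. Grouping the reduction as $x_0 \to_S^* x_{j_1} \redrroot{R} x_{j_1+1} \to_S^* x_{j_2} \redrroot{R} x_{j_2+1} \to_S^* \cdots$ produces an infinite sequence $x_0, x_{j_1+1}, x_{j_2+1}, \ldots$ in $\to_S^* \cdot \redrroot{R}$ whose sources all appear in the original reduction and are therefore $S$-terminating; this contradicts the assumption on the left-hand side.

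I do not expect a serious obstacle here; the only thing that needs care is the bookkeeping around $S$-termination of the sources used in the refinement and regrouping, which is handled uniformly by the preservation observation together with the fact that every term actually appearing in an infinite reduction of either of the two relations must be $S$-terminating (as the source of the next step).
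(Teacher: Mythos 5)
Your proof is correct and follows essentially the same route as the paper: the paper packages your ($\Leftarrow$) refinement argument as the inclusion $(\to_S^* \cdot \redrroot{R}) \cap \to_{\SN(\to_S)} \subseteq ((\redrroot{R} \cup \to_S) \cap \to_{\SN(\to_S)})^+$ (which implicitly uses the same preservation of $S$-termination along $\to_S$-steps that you spell out), and its `only if'-direction is exactly your regrouping of an infinite reduction with infinitely many root steps. No gaps.
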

\begin{proof}
By definition $\SNdpm{R}{S}$ is equivalent to termination of \linebreak
$(\to_S^* \cdot \redrroot{R}) \cap \to_{\SN(\to_S)}$. Since
\[(\to_S^* \cdot \redrroot{R}) \cap \to_{\SN(\to_S)} \;\; \subseteq \;\;
((\redrroot{R} \cup \to_S) \cap \to_{\SN(\to_S)})^+, \]
the `if'-part of the lemma follows.

For the `only if'-part assume  $(\redrroot{R} \cup \to_S) \cap \to_{\SN(\to_S)}$ admits an
infinite reduction. If this reduction contains finitely many $\redrroot{R}$-steps, then this
reduction ends in an infinite $\to_S$-reduction, contradicting the assumption that all terms
in this reduction are $S$-terminating. So this reduction contains infinitely many 
$\redrroot{R}$-steps, hence can be written as an infinite $(\to_S^* \cdot \redrroot{R}) \cap
\to_{\SN(\to_S)}$ reduction.  \qed
\end{proof}

\begin{lemma}
\label{lemdp2}
Let $R,S$ be TRSs. Then $\SNdpm{R}{S}$ holds if and only if for every term $t$ and
every $m \in \nat$ there exists $n \in \nat$ such that
\begin{quote}
for every $n$-step $(\redrroot{R} \cup \to_S)$-reduction $t=t_0 \to t_1 \to \cdots \to t_n$ there
exists $i \in [0,n]$ such that $t_i$ admits an $m$-step $\to_S$-reduction.
\end{quote}
\end{lemma}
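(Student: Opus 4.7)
I would prove both directions by contrapositive, in each case reducing to König's lemma thanks to the finite branching of $(\redrroot{R} \cup \to_S)$.

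For $(\Rightarrow)$, suppose the condition fails at some $t$ and $m$. Then for every $n$ there is an $n$-step $(\redrroot{R} \cup \to_S)$-reduction from $t$ along which no term admits an $m$-step $\to_S$-reduction. These reductions, closed under prefix, form a finitely branching tree of unbounded depth, so König's lemma yields an infinite such reduction. Every term along it admits only $\to_S$-reductions of length $< m$ and is therefore $\to_S$-terminating, contradicting $\SNdpm{R}{S}$ via Lemma~\ref{lemdp1}.

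For $(\Leftarrow)$, suppose $\SNdpm{R}{S}$ fails; by Lemma~\ref{lemdp1} there is an infinite $(\redrroot{R} \cup \to_S)$-reduction $t_0 \to t_1 \to \cdots$ with every $t_i$ $\to_S$-terminating. Writing $L(s)$ for the length of the longest $\to_S$-reduction from a $\to_S$-terminating term $s$, I want to construct an infinite $(\redrroot{R} \cup \to_S)$-reduction $v_0 \to v_1 \to \cdots$ along which the $L(v_i)$ are uniformly bounded by some $m$. Given such a bounded reduction, König's lemma applied in the opposite direction exhibits arbitrarily long $(\redrroot{R} \cup \to_S)$-reductions from $v_0$ no term of which admits an $(m+1)$-step $\to_S$-reduction, contradicting the condition at $t := v_0$ and $m+1$. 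To extract such a bounded reduction, I work within the set $U$ of $\to_S$-terminating terms that start an infinite $(\redrroot{R} \cup \to_S)$-reduction with $\to_S$-terminating terms only; $U$ is nonempty (it contains every $t_i$) and every element has a $U$-successor (the next term in any witnessing continuation). The key structural fact is that $\to_S$ strictly decreases $L$, so a $\to_S$-successor of an element $v \in U$ of minimal $L$-value would have $L$ strictly below $\min_U L$ and therefore cannot lie in $U$; hence any $U$-successor of $v$ is reached via a $\redrroot{R}$-step. Combining this with finite branching and a König-style argument on the subtree of $U$-reductions whose $L$-values are confined to a fixed band produces the required infinite $L$-bounded reduction inside $U$.

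The main obstacle is the last step: showing that for some bound $m$ the subtree of $L$-bounded-by-$m$ $U$-reductions from a suitable $v \in U$ is actually infinite. This is where the interplay between $\redrroot{R}$ and $\to_S$ has to be exploited delicately; one cannot simply follow a greedy min-$L$ strategy without verifying that it stays within $U$, and the compactness argument has to use both the minimality of the chosen $L$-level and the closure of $U$ under taking some successor to rule out the pathological case in which every infinite $U$-reduction has unbounded $L$.
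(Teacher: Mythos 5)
Your proof of the direction ``$\SNdpm{R}{S}$ implies the condition'' is correct, and it is essentially the paper's own argument made explicit: the paper routes through the intermediate statement ``for every $t$ there is an $n$ such that no $n$-step $(\redrroot{R}\cup\to_S)$-reduction from $t$ consists solely of $\to_S$-terminating terms'', using K\"onig's lemma for both $\redrroot{R}\cup\to_S$ and $\to_S$; your contrapositive (K\"onig's lemma on the prefix-closed, finitely branching tree of reductions all of whose terms have $\to_S$-derivation height below $m$) is the same idea.

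The converse direction is where your proposal has a genuine gap, and you have located it precisely: you must turn an infinite $(\redrroot{R}\cup\to_S)$-reduction all of whose terms are $\to_S$-terminating into one along which the $\to_S$-derivation heights are \emph{uniformly} bounded, and your sketch via the set $U$ and a minimal-$L$ element does not deliver this (as you concede). The obstacle is not a technicality: the implication is \emph{false} at the level of finitely branching abstract reduction systems, which is the only structure your argument uses. Take objects $a_i$ ($i\ge 0$) and $b_{i,j}$ ($1\le j\le i$) with $a_i\to_1 a_{i+1}$, $a_i\to_2 b_{i,1}$ and $b_{i,j}\to_2 b_{i,j+1}$. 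Every object is $\to_2$-terminating (the height of $a_i$ is $i$), the spine $a_0\to_1 a_1\to_1\cdots$ is an infinite reduction through $\to_2$-terminating objects, so $\SN(\to_1/_{\msf{min}}\to_2)$ fails; yet every $n$-step $(\to_1\cup\to_2)$-reduction from $a_i$ must pass through some $a_k$ with $k\ge n/2$, so for each $m$ the choice $n=2m$ witnesses the right-hand condition (and it holds vacuously at the $b_{i,j}$). Hence no combination of finite branching and K\"onig-style compactness can close this direction; a correct proof has to exploit that $\redrroot{R}$ acts at the root and $\to_S$ is closed under contexts, or find another route entirely. For calibration against the source: the paper arrives at $\SNdpm{R}{S}\Leftrightarrow\forall t\,\exists n\,\forall m\,\psi(t,n,m)$ with $\psi(t,n,m)$ the decidable predicate ``every $n$-step reduction from $t$ contains a term admitting an $m$-step $\to_S$-reduction'' (monotone in $n$, antitone in $m$), and then passes to the stated form $\forall t\,\forall m\,\exists n\,\psi(t,n,m)$ with the words ``after removing double negations''; the half of that quantifier exchange you are stuck on is exactly the half that is not a formal triviality, so you have correctly isolated the crux of the lemma but not resolved it.
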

\begin{proof}
Due to Lemma \ref{lemdp1} $\SNdpm{R}{S}$ is equivalent to finiteness of all 
$(\redrroot{R} \cup \to_S)$-reductions only consisting of $\to_S$-terminating terms.
Since $(\redrroot{R} \cup \to_S)$ is finitely branching, this is equivalent to 
\begin{quote}
for every term $t$ there exists $n \in \nat$ such that
no $n$-step $(\redrroot{R} \cup \to_S)$-reduction $t=t_0 \to t_1 \to \cdots \to t_n$ 
exists for which $t_i$ is $\to_S$-terminating for every $i \in [0,n]$.
\end{quote}
Since $\to_S$ is finitely branching, $\to_S$-termination of $t_i$ for every $i \in [0,n]$
is equivalent to the existence of $m \in \nat$ such that no $t_i$ admits an $m$-step
$\to_S$-reduction. After removing double negations, this proves equivalence with the claim
in the lemma.
\qed
\end{proof}

\begin{theorem}
The property $\SNdpm{R}{S}$ for given TRSs $R,S$ is $\cpi{0}{2}$-complete.
\end{theorem}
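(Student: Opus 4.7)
The plan is to prove the two directions separately: first membership in $\cpi{0}{2}$ via the characterization provided by Lemma~\ref{lemdp2}, and then $\cpi{0}{2}$-hardness by reducing the uniform halting problem using the dependency pair characterization of termination from \cite{AG00}.

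For membership, I would take as starting point the equivalence established in Lemma~\ref{lemdp2}, which can be written as
\[
\SNdpm{R}{S} \;\Longleftrightarrow\; \forall t \in \nat\,\forall m \in \nat\,\exists n \in \nat\,\psi(R,S,t,m,n),
\]
where $\psi$ expresses that every $n$-step $(\redrroot{R} \cup \to_S)$-reduction starting from $t$ passes through some term that admits an $m$-step $\to_S$-reduction. Because $R$ and $S$ are finite, hence finitely branching, the set of all $n$-step $(\redrroot{R} \cup \to_S)$-reductions from $t$ is finite and effectively enumerable, and likewise the existence of an $m$-step $\to_S$-reduction from a given term is decidable by bounded search. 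Hence $\psi$ is a decidable predicate on the natural-number encodings of its arguments (Remark~\ref{rem:encoding}), and the quantifier compression $\forall t\forall m \equiv \forall p$ shows that the entire formula is $\cpi{0}{2}$.

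For hardness I would reduce the uniform halting problem, $\cpi{0}{2}$-complete by Corollary~\ref{cor:tmall}, to $\SNdpm{R}{S}$. Given a Turing machine $\tm$, the construction used in the proof of Theorem~\ref{tm:snuniform} (passing from $\tm$ to $\widehat{\tm}$ via Lemma~\ref{lem:tmall} and applying type introduction to $\tmtrs{\widehat{\tm}}$) yields a TRS $R_\tm$ with $\SNr{R_\tm}$ iff $\tm$ halts on all inputs. Then the dependency pair characterization cited just before the statement,
\[
\SNdpm{\DP(R_\tm)}{R_\tm} \;\Longleftrightarrow\; \SNr{R_\tm},
\]
turns $\tm \mapsto (\DP(R_\tm),R_\tm)$ into a computable many-one reduction from uniform halting to $\SNdpm{R}{S}$. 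By Lemma~\ref{lem:reduce} this gives $\cpi{0}{2}$-hardness, completing the proof.

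The conceptually delicate step is membership: the real work has already been done in Lemma~\ref{lemdp2}, which collapses the two sources of potential unboundedness — an infinite outer $\redrroot{R}\cup\to_S$ reduction, and the infinite $\to_S$-reductions witnessing non-minimality — into two bounded searches, a maximum-length bound $n$ and a maximum-depth bound $m$ on $\to_S$-reductions from the intermediate terms. This collapse is precisely what keeps $\SNdpm{R}{S}$ inside the arithmetical hierarchy, in sharp contrast to $\SNdp{R}{S}$, which escapes to $\cpi{1}{1}$ by Theorem~\ref{thm:snilocal}. Once Lemma~\ref{lemdp2} is in place, the hardness direction is essentially a free consequence of the dependency pair transformation together with the $\cpi{0}{2}$-hardness of uniform termination.
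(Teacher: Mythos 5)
Your proposal is correct and follows essentially the same route as the paper: membership in $\cpi{0}{2}$ via the $\forall t\,\forall m\,\exists n$ characterization of Lemma~\ref{lemdp2} (whose body is decidable by finite branching), and hardness via the equivalence $\SNdpm{\DP(R)}{R} \Longleftrightarrow \SN(R)$ together with the $\cpi{0}{2}$-completeness of uniform $\SN$. You merely unpack the hardness chain one step further back to the uniform halting problem, which the paper leaves implicit in citing Theorem~\ref{tm:snuniform}.
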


\begin{proof}
$\SN(R)$ is $\cpi{0}{2}$-complete and  $\SN(R)$ is 
equivalent to $\SNdpm{\DP(R)}{R}$, so $\SNdpm{R}{S}$ is $\cpi{0}{2}$-hard. That $\SNdpm{R}{S}$ is in $\cpi{0}{2}$ follows from Lemma
\ref{lemdp2}; note that the body of the claim in Lemma \ref{lemdp2} is recursive.
\qed
\end{proof}

\section{Conclusion and Future work}

In this paper we have analyzed the proof theoretic complexity, in term
of the arithmetic and analytical hierarchy, of termination properties
in term rewriting. The position of $\WN$ and $\SN$ were to be
expected, but the position of dependency pair problems is remarkably
high.
We have shown that (ground) confluence is $\cpi{0}{2}$-complete both uniform and for single terms.
The situation becomes more interesting when we look at weak confluence and
weak confluence on ground terms.
While the former is $\csig{0}{1}$,
the latter turns out to be $\cpi{0}{2}$-complete.
In future work, we will also further study the place in the analytic
hierarchy of properties of infinitary rewriting like $\WNi$.

\bibliography{main}

\end{document}